\newcommand{\updown}{\hspace{-0.8mm}\uparrow\downarrow}
\newcommand{\downup}{\hspace{-0.8mm}\downarrow\uparrow}
\newcommand{\upket}{\hspace{1mm}|\hspace{-1.6mm}\uparrow\hspace{0.5mm}\rangle}
\newcommand{\downket}{|\hspace{-1.6mm}\downarrow\hspace{0.5mm}\rangle}
\newcommand\ketGS{|\Psi_0\rangle} 
\newcommand\braGS{\langle \Psi_0 |}
\newcommand{\ket}[1]{|#1\rangle}
\newcommand{\bra}[1]{\langle#1|}
\newcommand{\ketbra}[2]{\ket{#1}\bra{#2}}
\newtheorem{lemma}{Lemma}
\newtheorem{fact}{Fact}
\begin{document}
\title{Rapidly mixing loop representation quantum Monte Carlo for Heisenberg models on star-like bipartite graphs}

\author{Jun Takahashi}
\affiliation{Center for Quantum Information and Control, University of New Mexico, Albuquerque, NM 87131, USA}
\affiliation{The Institute for Solid State Physics, The University of Tokyo, Kashiwa, Chiba, Japan}
\orcid{0000-0003-1327-9322}
%\email{juntakahashi@unm.edu}
\email{juntakahashi@issp.u-tokyo.ac.jp}
\author{Sam Slezak}
\affiliation{Information Sciences, Los Alamos National Laboratory, Los Alamos, NM, USA}
\author{Elizabeth Crosson}
\affiliation{Unaffiliated}
\maketitle

\begin{abstract}
Quantum Monte Carlo (QMC) methods have proven invaluable in condensed matter physics, particularly for studying ground states and thermal equilibrium properties of quantum Hamiltonians without a sign problem. Over the past decade, significant progress has also been made on their rigorous convergence analysis. 
 Heisenberg antiferromagnets (AFM) with bipartite interaction graphs are a popular target of computational QMC studies due to their physical importance, but despite the apparent empirical efficiency of these simulations it remains an open question whether efficient classical approximation of the ground energy is possible in general.  In this work we introduce a ground state variant of the stochastic series expansion QMC method, and for the special class of AFM on interaction graphs with an $O(1)$-bipartite component (star-like), we prove rapid mixing of the associated QMC Markov chain (polynomial time in the number of qubits) by using Jerrum and Sinclair's method of canonical paths.  This is the first Markov chain analysis of a practical class of QMC algorithms with the loop representation of Heisenberg models. 
 Our findings contribute to the broader effort to resolve the computational complexity of Heisenberg AFM on general bipartite interaction graphs.

\end{abstract}

\section{Introduction}
 The Heisenberg Hamiltonian couples neighboring spin-1/2 degrees of freedom arranged on fixed spatial sites by the \emph{exchange interaction} of the form $\pm \vec{\sigma}_i\cdot \vec{\sigma}_j = \pm (X_i X_j + Y_i Y_j + Z_i Z_j)$, which models the effective energetic tendencies of neighboring electron spins in molecular orbitals within a solid ($i,j$ refer to neighboring sites, and $\vec{\sigma} = (X,Y,Z)$ is a vector of Pauli matrices).   The coupling $-\vec{\sigma}_i\cdot \vec{\sigma}_j$ in a Hamiltonian is called \emph{ferromagnetic} (FM) because it energetically favors alignment of spins, while the coupling $+\vec{\sigma}_i\cdot \vec{\sigma}_j$ favors anti-alignment and is called \emph{antiferromanetic} (AFM).

Exact solutions for Heisenberg models are known only for a few cases \cite{lieb1961two,bet31zur,sha81exa,maj69nex} so approximate or computational solutions are needed to study the model in more complicated geometries such as 2D lattices and beyond. 
Quantum Monte Carlo (QMC) methods seek computational solutions by reducing the approximation of equilibrium observables in certain quantum spin systems to the problem of sampling classical configurations from a weighted probability distribution.  
QMC methods use a Markov chain to sample from the target distribution, and Monte Carlo estimation to approximate observables of interest.   An important condition for the efficient application of QMC is that the model should not have a \emph{sign problem} \cite{Loh1990SignProb,Marshall1955Antiferro} 
, meaning that there should be some choice of local basis in which all of the off-diagonal Hamiltonian matrix elements are real and non-positive. 
In both Suzuki's original formulation of QMC \cite{Suzuki1977MonteCarlo,Suzuki1976Relationship} and more modern QMC treatments for spin systems \cite{sandvik2010computational}, this condition ensures that the paths contributing to the path integral for the ground state all have non-negative weights. 
More generally, the absence of a sign problem reduces the computational complexity of estimating equilibrium observables \cite{cubitt2016complexity} and such models without a sign problem are now commonly called ``stoquastic'' in the context of Hamiltonian complexity \cite{bravyi2007complexity,bravyi2008complexitySFF,bravyi2015monte}. 

Ferromagnetic Heisenberg models are always stoquastic (the interaction $-(X_i X_j + Y_i Y_j + Z_i Z_j)$ has all real and non-positive matrix elements in the $Z$-basis).   For the antiferromagnetic interaction $(X_i X_j + Y_i Y_j + Z_i Z_j)$, the desired property of off-diagonal matrix elements in the $Z$-basis can be obtained by a local unitary transformation~\cite{Marshall1955Antiferro}, $(Z \otimes I)\vec{\sigma}_i\cdot \vec{\sigma}_j (Z\otimes I) = (Z_i Z_j -X_i X_j - Y_i Y_j)$, but applying this transformation to successfully render the entire model stoquastic requires that the underlying interactions are \emph{bipartite} in the graph-theoretic sense (i.e., vertices are two-colorable). 

\paragraph{Hamiltonian Complexity.}
Computing the ground state energy of an AFM Heisenberg model on general graphs up to an inverse polynomial precision is known to be {\sf QMA}-complete, and is essentially regarded as the quantum analog of the classical {\sf MaxCut} problem~\cite{gharibian2019almost,anshu2020beyond}, thus called {\sf QuantumMaxCut} in the approximation algorithm community. 
In contrast, the same problem for {\it bipartite} graphs is contained in the more limited complexity class {\sf StoqMA} \cite{cubitt2016complexity} because it is stoquastic as described above. 
While classical {\sf MaxCut} is trivial to solve on bipartite graphs, the complexity of {\sf QuantumMaxCut} for bipartite graphs is unknown and is explicitly raised as an open problem in \cite{gharibian7faces}. Currenlty, the only proven fact is the containment in {\sf StoqMA} \cite{piddock2015complexity}, so the possibility is wide open from containment in {\sf P} to being {\sf StoqMA}-complete. 
Classical ground state approximations to the {\sf QuantumMaxCut} problem on general graphs have been obtained in special cases by semidefinite programming (SDP) heierarchies, generalizing the Goemans-Williamson algorithm for classical {\sf MaxCut} \cite{king2022improved,takahashi20232,watts2024relaxations,eunouojas2024}.  These SDP methods  are not explicitly affected by the sign problem, but bipartite graphs (including the star-like graphs we consider) have been an important source of tractable examples for these algorithms.

 QMC and related methods have been used for 
 proving polynomial-time simulability of a variety of stoquastic systems without {\sf NP}-hardness obstructions, including 1D spin systems at constant temperature~\cite{crosson2021rapid}, ferromagnetic Ising models on general interaction graphs~\cite{bravyi2014monte}, high-temperature quantum Ising spin glasses on bounded-degree graphs~\cite{crosson2020classical}, as well as a landmark result by Bravyi and Gosset establishing ground state and thermal state simulation for XY ferromagnets on arbitrary interaction graphs~\cite{bravyi2017polynomial}.  
 We note that this latter result includes $XX$ and $YY$ interactions as long as they are stoquastic (thus somewhat extending the notion of ``ferromagnetic'') and also local $Z$ fields, but not $ZZ$ interactions.  
 Another related result is a deterministic quasipolynomial-time simulation of XXZ models with tunable $XX,YY,$ and $ZZ$ interactions as long as the $ZZ$ term dominates the other two in a ferromagnetic way ~\cite{harrow2020classical}.

Based on the empirical evidence provided by decades of computational studies \cite{sandvik2010computational}, 
 it is fairly plausible that QMC algorithms can approximate ground states of bipartite {\sf QuantumMaxCut} models on a wide range of interaction graphs in classical polynomial time, but there has been no rigorous proof for efficient convergence of QMC for any bipartite AFM models to date.  Our work thus provides a first step along a reasonable approach to putting bipartite cases of {\sf QuantumMaxCut} in {\sf BPP}. 
 
\paragraph{Markov chains and Mixing Times.}  To obtain a polynomial-time simulation by QMC, the \emph{mixing time} of Markov chain that drives the QMC method must scale polynomially with the system size~\cite{levin2017markov}.   The mixing time of a Markov chain characterizes the number of classical updates which suffice to produce each approximate sample from the target distribution - a Markov chain which equilibrates in a time that scales logarithmically in the size of its state space (i.e. polynomially in the number of spins or particles for a quantum system) is called {\it rapidly mixing}.  

A general technique for analyzing the mixing time is to bound it in terms of reciprocal of the spectral gap of the Markov chain transition matrix.  To bound the spectral gap of the QMC dynamics for $O(1)$-bipartite graphs, we use the method of canonical paths due to Jerrum and Sinclair~\cite{jerrum1988conductance, sinclair1992improved}. 
In this method the Markov chain state space is viewed as a combinatorial graph, with vertices representing states of the Markov chain and edges representing transitions between states. 
The task of showing rapid-mixing is then essentially to show that this combinatorial graph is an expander. 
The canonical path method does this by showing the existence of a routing of the probability flow through the state space graph that delivers the necessary stationary probability to every vertex without overloading any particular edges.  
We apply the method of canonical paths in a relatively straightforward way (inspired by the textbook example of ``left-to-right bit fixing paths'') - most of our technical effort is devoted to developing the QMC representation that is simple enough to analyze yet close enough to the practical method, and proving geometric and topological facts about the state space so that these straightforward canonical paths suffice to show rapid mixing.   The success of this approach highlights the opportunity to apply more sophisticated Markov chain analysis techniques to the interdisciplinary open problem of simulating bipartite AFM by rigorously efficient QMC.  

\begin{figure}[t]
  \centering
  \includegraphics[scale=0.7]{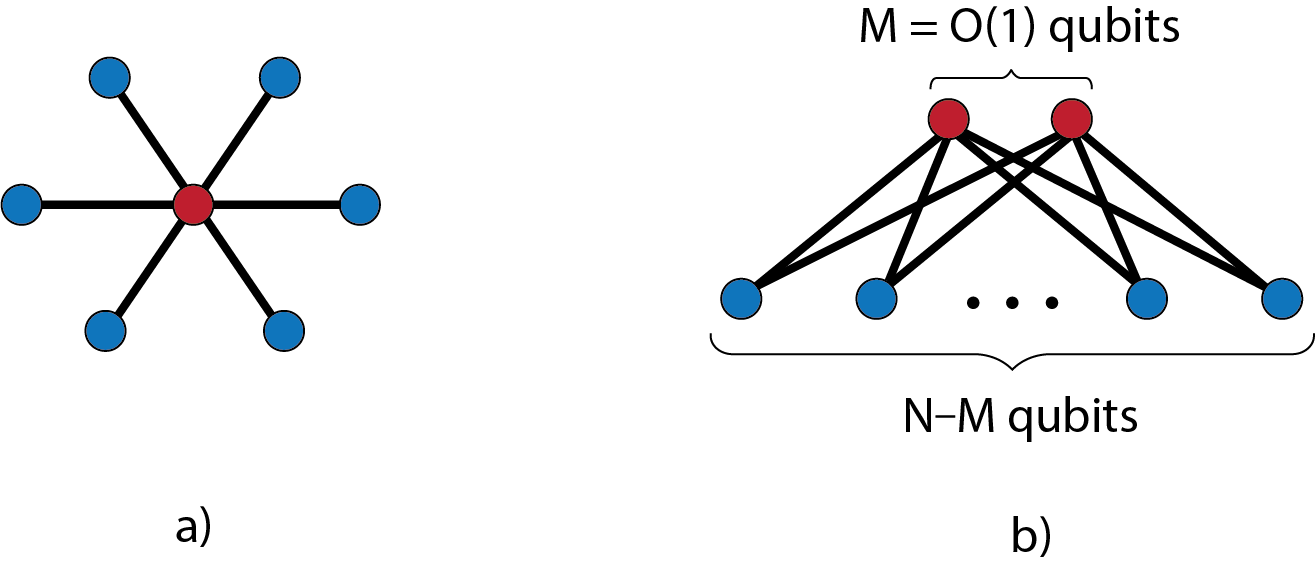}
  \caption{a) The ``Heisenberg star'', or star graph with $N-1$ qubits interacting antiferromagnetically with a single central qubit, is the archetypical example of the bipartite AFM we analyze. b) More generally we allow a constant number of qubits $M = O(1)$ to interact antiferromagnetically with $N - M$ qubits.  Our analysis also extends to include staggered local fields and ferromagnetic interactions within each bipartite component.}\label{fig:mainmodel}

  \label{fig:figure1}
\end{figure}
\paragraph{$O(1)$-bipartite (star-like) Heisenberg Models.}
Throughout this work we consider Heisenberg Hamiltonians defined on a bipartite interaction graph $\mathcal{A} \cup \mathcal{B}$ with $N = |\mathcal{A}|+|\mathcal{B}|$ qubits, in which one component has a fixed size $|\mathcal{A}|= O(1)$ and the other component has size $|\mathcal{B}| = O(N)$.   For any pair of qubits define the 2-local projection operator
\begin{equation}h_{ij} =  (\openone-X_iX_j-Y_iY_j-Z_iZ_j)/4.
\end{equation}
Then the Heisenberg AFM Hamiltonian with weights $w_{ij} \geq 0$ is
\begin{equation}\label{eq:HAFM}
H_{\textrm{AFM}} =- \sum_{i \in \mathcal{A},j\in \mathcal B} w_{ij} h_{ij}, 
\end{equation}
which we will explain in more detail in \cref{sec:notations}. 
Although the AFM terms are the primary interest in this work, our analysis also extends (without much additional complication) to including ferromagnetic terms acting within $\mathcal{A}$ or within $\mathcal{B}$, and 1-local $X$ terms with their signs staggered to enhance the antiferromagnetic order.  

\begin{align} 
H_{\textrm{FM}} &=- \sum_{(i,j) \in \mathcal{A} \textrm{ or } (i,j) \in \mathcal{B}}J_{ij}(\openone - h_{ij})\\
H_{X} &= \frac{1}{2}\sum_{i \in A} g_i(\openone-X_i)-\frac{1}{2}\sum_{j \in B} g_j (\openone-X_j) 
\end{align}
where $ J_{ij}, g_i$ are positive weights.  The most general Hamiltonian we consider is 
\begin{equation}
    H = H_\textrm{AFM} + H_\textrm{FM} + H_\textrm{X}\label{eq:fullHamiltonian}
\end{equation}
for $O(1)$-bipartite graphs, as indicated in \Cref{fig:mainmodel}.  \textbf{Our main result is a proof of $\textrm{poly}(N)$ time rapid mixing for the QMC Markov chain dynamics of \Cref{sec:introduceQMC} applied to these models, which allows for approximating the expectation of local observables and approximately sampling from the ground state in the $Z$ basis. }
  Since our results pertain to a QMC method that estimates the ground state energy of \eqref{eq:fullHamiltonian}, we emphasize that the ground state energy of these $O(1)$-bipartite models can always be found in $\textrm{poly}(N)$ time by the \textbf{Lieb-Mattis theorem} (see \Cref{app:LiebMattisTheorem}). Our motivation here is not just to find \emph{some} polynomial time algorithm for approximating these ground states, but to show that a popular and practical QMC algorithm also yields a rigorously efficient classical approximation algorithm for estimating ground state observables for these systems. The fact that the QMC algorithm is operated in a general way and does not make use of the Lieb-Mattis theorem to reduce the dimensionality brings hope for proving efficiency for more general cases in future work. 
Another perspective is that our main result lower bounds the spectral gap of the dynamical generator of a kinetic loop model that has not been previously analyzed. This loop model equilibrium distribution simulates the Heisenberg AFM ground state and thus yields an efficient algorithm for the problem of approximating the ground energy.

\paragraph{Loop representation of QMC for Heisenberg models.}  Following the discussion of stoquastic complexity, a central goal of computational QMC studies is to design Markov chains for which the state space, stationary distribution, and update rules lead to rapid mixing.   Despite progress in the rigorous analysis of QMC methods, bipartite Heisenberg models with AFM terms have resisted such analysis to date.  A high-level reason for this is that the form of the Heisenberg interaction leads to QMC configurations with many combinatorial constraints when using the most basic type of QMC \cite{Suzuki1977MonteCarlo}. 
Such constraints not only brings difficulties for the rigorous analysis of the Markov chain mixing time, but also results in impractical long mixing times.

To resolve this issue, modern QMC methods used in computational studies of Heisenberg models involve non-local cluster updates (analogous to Swendson-Wang updates \cite{swendsenwang} for classical Ising models, instead of single-site Glauber dynamics) achieving practically fast mixing time. 
Our approach is based on a state-of-the-art QMC used in computational condensed matter physics, called the stochastic series expansion (SSE) method \cite{Sandvik1991QMCSpinSys}.  This method is especially optimized for Heisenberg interactions, where it has been applied in thermal and ground state simulations with over $10^6$ qubits~\cite{takahashi2024so5}. 
In the simplified version of this method which we introduce and analyze, the stationary distribution over configurations of clusters (which are conventionally called ``loops'' even when they can be open strings) takes a simplified purely geometric form, proportional to $2^{\# \textrm{ of loops}}$.
Only by focusing on such practically fast QMC methods and appropriately simplifying them was it possible to rigorously prove rapid-mixing as we present here.

\begin{figure}[h]
  \centering
  \includegraphics[scale=0.2]{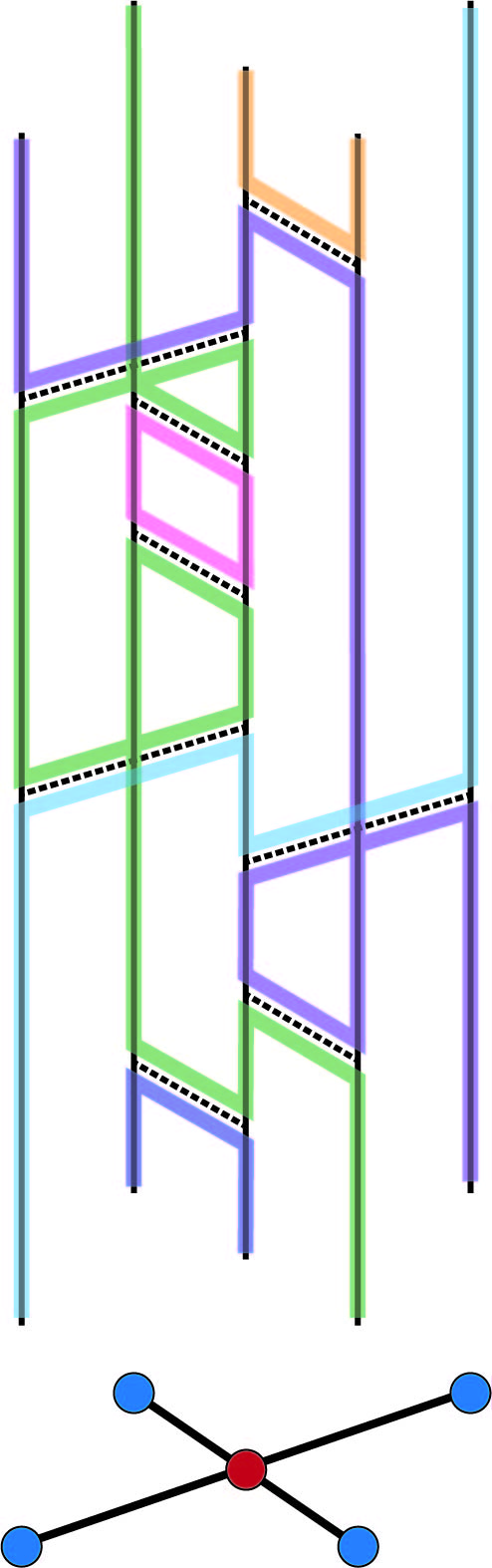}
  \caption{An example configuration from the SPE Markov chain  we analyze. The loop configuration (top) is an example of a configuration with $2B=8$ operators and 6 line segments (``loops'') for a Heisenberg model on a 4-arm star graph (bottom).}
  \label{fig:introSSE}
\end{figure}

The QMC we analyze closely follows the standard SSE algorithm, except for one difference that is significant for the analysis: instead of using the Taylor series for $\textrm{tr} [e^{-\beta H}]$ as is done in standard SSE, we use a fixed large positive integer power of the Hamiltonian $(-H)^L$  applied to the trial state $|+^N\rangle$ to project onto the low energy subspace. 
As we discuss later on, although physical intuition may indicate that this is not much of a difference, this in fact turns out to be crucial for our proof to work because of topological reasons. 
The idea of applying $(-H)^L$ to obtain the ground state has been done in many QMC methods  \cite{Sandvik_2010LoopValence}, but in our QMC, we use the $z$-basis unlike \cite{Sandvik_2010LoopValence} and then switch to the loop representation. 
We term this modified version of SSE as \emph{Stochastic Power Expansion} (SPE) QMC for convenience, as both the open boundary condition and the loop representation are essential for our analysis.

 \Cref{fig:introSSE} illustrates a typical SPE (SSE) configuration for the Heisenberg AFM on the star graph.  We briefly overview the method here; see \Cref{sec:introduceQMC} for a full derivation and explanation.
\begin{itemize}
\item For a general bipartite AFM, the SSE representation augments the interaction graph $\mathcal{G}$ with an ``imaginary time direction'' $\{1,...,2B\}$ and considers configurations consisting of an assignment of a ``bridge'' (corresponding to an edge in $\mathcal G$ i.e. a local AFM term, dotted in the figure) to each ``(imaginary) time slice'' $j \in \{1,...,2B\}$.   

\item One samples from this set of bridge configurations by a Markov chain that switches a randomly chosen bridge at each step. The Markov chain satisfies detailed balance and converges to the stationary distribution where probabilities are proportional to $2^{\# \textrm{ of loops}}$. The loops (which can be closed or open strings, shown in distinct colors for visual clarity in \Cref{fig:introSSE}) are defined via the bridge configuration. 

\item The loop structure of a configuration will in general undergo non-local changes when a single bridge is changed, which complicates the analysis but is essential for the efficiency of the method in practice.  
\end{itemize}

This paper is structured as follows. 
In \cref{sec:notations}, we introduce the notations as well as all possible forms of the Hamiltonians that we consider in this work. 
In \cref{sec:introduceQMC}, we derive the configuration space and the probability distribution that our QMC method will be sampling from. This derivation shows how the seemingly nonquantum yet topological configurations that we sample are related to the Heisenberg model. 
\Cref{sec:mixingtimemachinery} is devoted to introducing the canonical path method, the mathematical technique we use for proving rapid mixing. 
We also go through the basics of Markov Chain Monte Carlo methods in general, and also precisely define the QMC sampling algorithm based on the framework. 
We then move on to the main proof in \cref{sec:mainproof}, where we cover all cases that we can prove rapid-mixing. 
Finally, in \cref{sec:discussion}, we discuss the implications of our result as well as relations with various concepts in condensed matter physics and also comment on open problems. 

We advice readers who are only interested in the core idea of the algorithm to see 
the simplest configuration space we consider in \cref{eq:SSE-loop-state-space} and the probability distribution over it that is explained thereafter. 
This should be enough to understand the QMC procedure explained in \cref{sec:actualQMCprocedure}, and also understand our proof in the simplest case which is explained in the first few subsections of \cref{sec:mainproof}.

\section{Notations and Definitions}\label{sec:notations} 
The (antiferromagnetic, weighted) quantum Heisenberg model associated with the interaction graph $\mathcal{G}=(\mathcal{V},\mathcal{E})$ is an $N = |\mathcal{V}|$ qubit (spin-$1/2$) Hamiltonian with 2-local interactions. 
Let us denote the Pauli operators by $X$, $Y$, $Z$, and adopt the usual abbreviation $X_i$ for the operator acting only on the $i$-th qubit nontrivially, i.e. $X_i\coloneq \openone\otimes\openone\ldots\otimes X\otimes\ldots\otimes\openone$, where $\openone$ is the identity operator on the 2-dimensional Hilbert space. $Y_i$, $Z_i$ are similarly defined.

The eigenstates of the the Pauli $Z$ operator are denoted as 
$Z\upket = \upket$ and $Z\downket = -\downket$. The eigenstates of the Pauli $X$ operator will be denoted as $\ket{+} =\frac{1}{\sqrt{2}}( \upket + \downket)$ and $\ket{-} =\frac{1}{\sqrt{2}}( \ket{\hspace{-0.8mm}\uparrow} - \ket{\hspace{-0.8mm}\downarrow})$ while adopting the standard tensor product conventions of $\upket\otimes\upket \eqqcolon \ket{\hspace{-1.4mm}\uparrow\uparrow}$ and $\ket{\hspace{-0.8mm}\uparrow}^{\otimes N} \eqqcolon \ket{\hspace{-0.8mm}\uparrow^N}$.

The antiferromagnetic Heisenberg Hamiltonian we consider will be defined as 
\begin{equation}\label{eq:HeisenbergHGeneral}
    H=-\sum_{(i,j)\in \mathcal{E}} w_{ij} h_{ij}, 
\end{equation}
where the summation $(i,j)$ runs through all the edges $\mathcal{E}$ of a graph $\mathcal{G}=(\mathcal{V},\mathcal{E})$. 
Here, the indices $i$ and $j$ correspond to spins (thus $|\mathcal{V}|=N$), and $h_{ij}$ is the Heisenberg interaction term $h_{ij}\coloneq (\openone-X_iX_j-Y_iY_j-Z_iZ_j)/4$ between those two qubits. 
This convention of the term makes the $h_{ij}$ to be the projector on to the the singlet state $\left(|\updown\rangle - |\downup\rangle\right)/\sqrt{2}$. 
The summation in the Hamiltonian $H$ has $M=|\mathcal{E}|$ terms where $w_{ij}$ are positive weights for the individual terms. 

In this work, we focus on the case when $\mathcal{G}$ is bipartite, i.e., $\mathcal{V}=\mathcal{A} \cup \mathcal{B}$ and edges only connect spins between $\mathcal{A}$ and $\mathcal{B}$. Thus, $i,j\in \mathcal{A} \rightarrow (i,j)\notin \mathcal{E}$, and same for $\mathcal{B}$. 
If \eqref{eq:HeisenbergHGeneral} is defined over such a bipartite graph, we can assume that the local unitary transformation 
\begin{align}\label{eq:stoquastic_bipartite-rotation}
\tilde{h}_{ij} \coloneq (Z_i \otimes \openone_j) ~h_{ij}~ (Z_i\otimes \openone_j)
\end{align}
has been carried out on each term resulting in the stoquastic version of $H$: 
\begin{align}\label{eq:StoqMaxCutHam}
    \tilde{H} %= -\sum w_{ij} \tilde{h}_{ij} 
    = - \sum ~ w_{ij}~ \underbrace{\frac{\openone+X_iX_j+Y_iY_j-Z_iZ_j}{4}}_{\tilde h_{ij}}.
\end{align}
Note that the bipartiteness is crucial here, since that is the only case when 
the total unitary transformation $U\coloneq\prod_{i\in \mathcal A}Z_i$ is well-defined to obtain 
the above $\tilde{H}=U H U^{\dagger}$.
The local terms of $\tilde H$ can then be decomposed as:
\begin{align}\label{eq:localhStoqaustic}
    \tilde h_{ij} =  \frac{1}{2}\left(I_{ij} +  S_{ij}\right),
\end{align}
where 
\begin{align}\label{eq:IdentNeel}
    I_{ij} = \ketbra{\hspace{-0.8mm}\uparrow\downarrow}{\uparrow\downarrow\hspace{-0.8mm}} + \ketbra{\hspace{-0.8mm}\downarrow\uparrow}{\downarrow\uparrow\hspace{-0.8mm}}
\end{align}
and
\begin{align}\label{eq:SwapNeel}
     S_{ij} = \ketbra{\hspace{-0.8mm}\uparrow\downarrow}{\downarrow\uparrow\hspace{-0.8mm}} + \ketbra{\hspace{-0.8mm}\downarrow\uparrow}{\uparrow\downarrow\hspace{-0.8mm}} .
\end{align}
It is useful to think of $ I_{ij}$ and $S_{ij}$ as the identity and swap operators respectively on the antialigned subspace: $\text{span}\left(\left\{   \ket{\updown},\ket{\downup}   \right\}\right)$.
Note that the basis rotation by $Z$ had the effect of changing the sign accompanying $S_{ij}$ thanks to the bipartiteness. 

We will denote the ground state energy (minimum eigenvalue) as $E_0$ without explicitly writing the dependency to the Hamiltonian, since in all cases it will be clear from the context.

In later sections, we will consider ferromagnetic interactions too, but {\it only those that connects among spins on the same side of the bipartition}. Note these terms are already stoquastic and will be unaffected by the transformation \eqref{eq:stoquastic_bipartite-rotation} that changes the anti-ferromagnetic terms into their stoquastic counterparts.  
In this case, we use a slightly different convention for the ferromagnetic terms as 
\begin{equation}\label{eq:FMterms1}
    -\sum_{(i,j)\in\mathcal{E}_{\mathrm{F}}} v_{ij} (\openone-h_{ij})  
\end{equation}
with positive $v_{ij}>0$ connecting $i$ and $j$ on the same side, denoted by $\mathcal{E}_\mathrm{F}$. 
In this case, we can decompose the term as 
\begin{equation}\label{eq:FMterms2}
    \openone-h_{ij} = \frac{1}{2}\left( I_{ij}^{\mathrm{F}} + S_{ij}\right), 
\end{equation}
which $S_{ij}$ is the same swap operator as in Eq. \eqref{eq:SwapNeel}, but $I^{\mathrm{F}}$ is now the identity operator on the ferromagnetic subspace:
\begin{align}\label{eq:IdentFM}
    I_{ij}^{\mathrm{F}} = \ketbra{\hspace{-0.8mm}\uparrow\uparrow}{\uparrow\uparrow\hspace{-0.8mm}} + \ketbra{\hspace{-0.8mm}\downarrow\downarrow}{\downarrow\downarrow\hspace{-0.8mm}} . 
\end{align}
Note that these ferromagnetic terms will not be affected by the $Z$ conjugation, since either two $Z$ operators or none would be applied to $h_{ij}$.

We will also consider adding 1-local transverse fields 
\begin{equation}\label{eq:transversefield}
    -\sum_{i\in\mathcal V} g_i (\openone\pm X_i)  = 
       -\sum_{i\in\mathcal A} g_i (\openone- X_i)     -\sum_{i\in\mathcal B} g_i (\openone + X_i) 
\end{equation}
where $g_i\geq 0$ and the sign depends on whether $i\in\mathcal A$ or $i\in\mathcal B$, so that after applying the $Z$ conjugation, there is no sign problem.

\section{Modified Stochastic Series Expansion Quantum Monte Carlo}\label{sec:introduceQMC}

In this section we introduce the QMC method we employ to prove rapid-mixing, which we dub the Stochastic Power Expansion (SPE) QMC. 
This can be seen as a modified version of the stochastic series expansion (SSE) Monte Carlo method \cite{Sandvik1991QMCSpinSys,Sandvik1992GenHands,Sandvik_1999OpLoop}. 
We blend desirable attributes of the standard SSE for finite temperature systems and the projector QMC in the so called ``valence-bond'' basis \cite{sandvik2007monte} designed for ground states while also incorporating ideas from random loop models \cite{ueltschi2012quantum,aizenman1994geometric,aizenman2020dimerization,goldschmidt2011quantum,Bj_rnberg_2019,Nahum_2013,motamarri2023loop}, all used and studied in condensed matter physics. 
Our loop representation generalizes the Temperley-Lieb algebra \cite{temperley2004relations,kauffman1987state} from 1-dimensional quantum spin chains to more general graphs, providing a framework for complexity analysis of the associated stochastic dynamics, which had not been explored previously. 
Recently a similar method was proposed and used to study a generalized SU($n$) version of the Heisenberg model on a bipartite lattice \cite{Desai2021Resummation}. As we show here, the SPE framework not only allows for  a natural SU($n$) extension, but also happens to yeild the simplest Markov Chain Monte Carlo to analyze the mixing time for Heisenberg systems. 

We shall note that the most nontrivial part in our construction of this SPE QMC design is the derivation of the configuration space that we explain in the following. Once we define the probability distribution that we would like to sample from, the actual procedure to sample those configurations with a Markov chain is fairly simple, and is concisely explained in \cref{sec:actualQMCprocedure}. 
How one could use such efficient sampling to compute physical observables including the ground state energy is a well known procedure in the computational physics QMC studies, and is explained in \cref{sec:ObsAndEnergyEst}. 

\subsection{Basic transformations for QMC}
We will first explain how to derive our QMC for the 
uniformly antiferromagnetic bipartite Hamiltonian 
\begin{equation}\label{eq:SimpleHam}
    H_0 = - \sum_{(i,j)\in\mathcal E} h_{ij}, 
\end{equation}
and then explain how to generalize it to 
the most generic form of the Hamiltonian which we consider here proving rapid mixing: 
\begin{align}\label{eq:GeneralHam}
    H =  &- \sum_{(i,j)\in\mathcal E} w_{ij}h_{ij}
    - \sum_{(k,l)\in \mathcal E_F} v_{kl}(\mathbbm 1 - h_{kl})\nonumber \\
    &- \sum_{m\in \mathcal V} g_m (\mathbbm 1 \pm X_m).
\end{align}

 The QMC we consider is based on applying $(-H_0)^B$ to the state $\ket{+^N}\coloneqq |+\rangle^{\otimes N}$ for some large positive integer power $B$, to form the state 
 \begin{equation}
 |M_B\rangle = \frac{(-H_0)^B|+^N\rangle}{\sqrt{ \langle +^N |(-H_0)^{2B}|+^N \rangle}}
 \end{equation}
 Since the ground state of $H_0$ has nonzero overlap with the $\ket{+^N}$ state, $|M_B\rangle$ will converge to the ground state $\ketGS$ exponentially fast as a function of $B$.  In order for this power method to truly converge to the ground state, $B$ would need to scale with the inverse spectral gap of $H$.  However, for the goal of estimating the ground energy within additivie precision $\epsilon$, it suffices to  project onto a subspace of states within energy $\epsilon$ above the ground state, and therefore $B$ can scale with the precision $\epsilon$ instead of the spectral gap.    Denote the projection operator onto the subspace of states with energy in the range $[E_0,E_0 + \epsilon]$ by $\Pi_{[E_0, E_0 + \epsilon]}$.   In \Cref{sec:scalingOfB} we prove that $\langle M_B |\Pi_{[E_0, E_0 + \epsilon]}|M_B\rangle > 1 -\delta$ when
$$B \geq  \frac{\|H\|}{2 \epsilon} (N + \ln (\delta w)^{-1})$$
where $w > 2^{-N}$ is the overlap of the initial trial state ($|+^N\rangle$ in this case) with the low energy subspace $\Pi_{[E_0,E_0 + \epsilon]}$.   This implies that a polynomial time algorithm with $B = \textrm{poly}(N)$ can take the leakage out of the low energy subspace $\delta$ to be exponentially small and the precision of energy estimate $\epsilon$ to be inverse polynomially small.   Therefore in the following we assume $B$ is sufficiently large that an observable $O$ can be estimated to desired precision via 
\begin{equation}\label{eq:expectation}
\langle O\rangle := \braGS O \ketGS 
\cong 
\frac{\bra{+^N}(-H_0)^B O (-H_0)^B\ket{+^N}}{\|\bra{+^N} (-H_0)^{2B} \ket{+^N}\|}. 
\end{equation}

As we show in \Cref{sec:ObsAndEnergyEst}, this estimation problem could be reduced to the sampling problem of the denominator 
\begin{align}\label{eq:veryfirstZ}
\mathcal Z &:= \bra{+^N}(-H_0)^{2B} \ket{+^N}\\ 
= &\frac{1}{2^N}\sum_{\sigma_L, \sigma_R}\langle \sigma_L | \biggl(\sum_{(i,j)\in \mathcal E} h_{ij}\biggr)^{2B}\hspace{-2mm} |\sigma_R\rangle \\ 
= & \frac{1}{2^{N+2B}}\sum_{\sigma_L, \sigma_R}\langle \sigma_L | \biggl(\sum_{(i,j)\in \mathcal E} {I}_{ij} + {S}_{ij}\biggr)^{2B}\hspace{-2mm} |\sigma_R\rangle, \label{eq:Z}
\end{align}
where $\sigma_L$ and $\sigma_R$ are $Z$-basis states and ${I}_{ij}$ and ${S}_{ij}$ are as defined in $\eqref{eq:IdentNeel}$ and \eqref{eq:SwapNeel} respectively. Note that moving forward we will neglect the constant $2^{-(N+2B)}$ factor as it will cancel out when estimating observable expectations as in \eqref{eq:expectation}.
By expanding the $2B^{\text{th}}$ power, we get a sequence of $I$s and $S$s, each acting on some edge $(i,j)\in \mathcal E$.
Let us denote with $\mathfrak{S}$ the set of all such sequences,
and $W(t)$ to be the $t$-th element (which is an operator) in a sequence $W\in\mathfrak{S}$. 
We then are able to further rewrite $\mathcal Z$ as 
\begin{equation}\label{eq:SSEZ}
\mathcal Z =\sum_{\sigma_L, \sigma_R, W\in\mathfrak{S}}\langle \sigma_L | \prod_{t=1}^{2B} W(t) |\sigma_R\rangle, 
\end{equation}
which is fairly close to to the typical finite temperature SSE, but with an open boundary condition. 
Upon inspecting Eq.~\eqref{eq:SSEZ}, one can notice that each term in the summation can only become either 0 or 1 depending on its consistency; if the two boundary conditions $\sigma_L$ and $\sigma_R$ are consistent with all the intermediate $S$ and $I$ operators in the sequence $W$, %(always restricted to the N\'eel subspace), 
the term evaluates to $1$, otherwise, it is $0$. Also notice that once $\sigma_R$ and $W$ is fixed, there is only one $\sigma_L$ that is consistent with them (if $\sigma_R$ and $W$ themselves are consistent). 
Therefore, we can even further simplify 
\begin{align}\label{eq:ConsistentCount}
Z= \sum_{\sigma_L,\sigma_R, W\in\mathfrak{S}} \mathbb{I}_{[\sigma_L,\sigma_R, W \mathrm{~are~ consistent}]} 
\end{align}
which is simply the number of consistent choices of $\sigma_R, W\in\mathfrak{S}$, which from now on we will just refer to as ``configurations.''

\subsection{Loop Representation}\label{sec:looprep}

Now, we switch to another representation that has two effects. 
First, the new representation naturally eliminates the ``hard constraints'' of consistencies present in \cref{eq:ConsistentCount}. 
Second, it also provides a purely geometric and topological way of viewing these configurations. 
Both of these points are closely related and contribute to the final simplicity of our analysis. 

In order to motivate the representation, we look at the matrix element of a consistent configuration $(\sigma_R, W)$, given by $\bra{\sigma_L}\prod_{t=1}^{2B}W(t)\ket{\sigma_R}$. If we insert $2B-1$ complete sets of $z$-basis states in between each of the $W(t)$ operators, we get the expansion:
\begin{align}\label{eq:ConsistentSeq}
    \bra{\sigma_L}W(1)\ketbra{\sigma_1}{\sigma_1}W(2)\ket{\sigma_2}
    \cdots\bra{\sigma_{2B-1}}W(2B)\ket{\sigma_R}
\end{align}
where the sums over the inserted states do not appear because for a given operator $W(t)$ and $z$ basis state $\sigma$, there is a unique $Z$-basis state $\sigma'$ such that $\bra{\sigma}W(t)\ket{\sigma'}$ is non-zero (1 actually).

\begin{figure}[h!]
  \centering
  \includegraphics[scale=0.5]{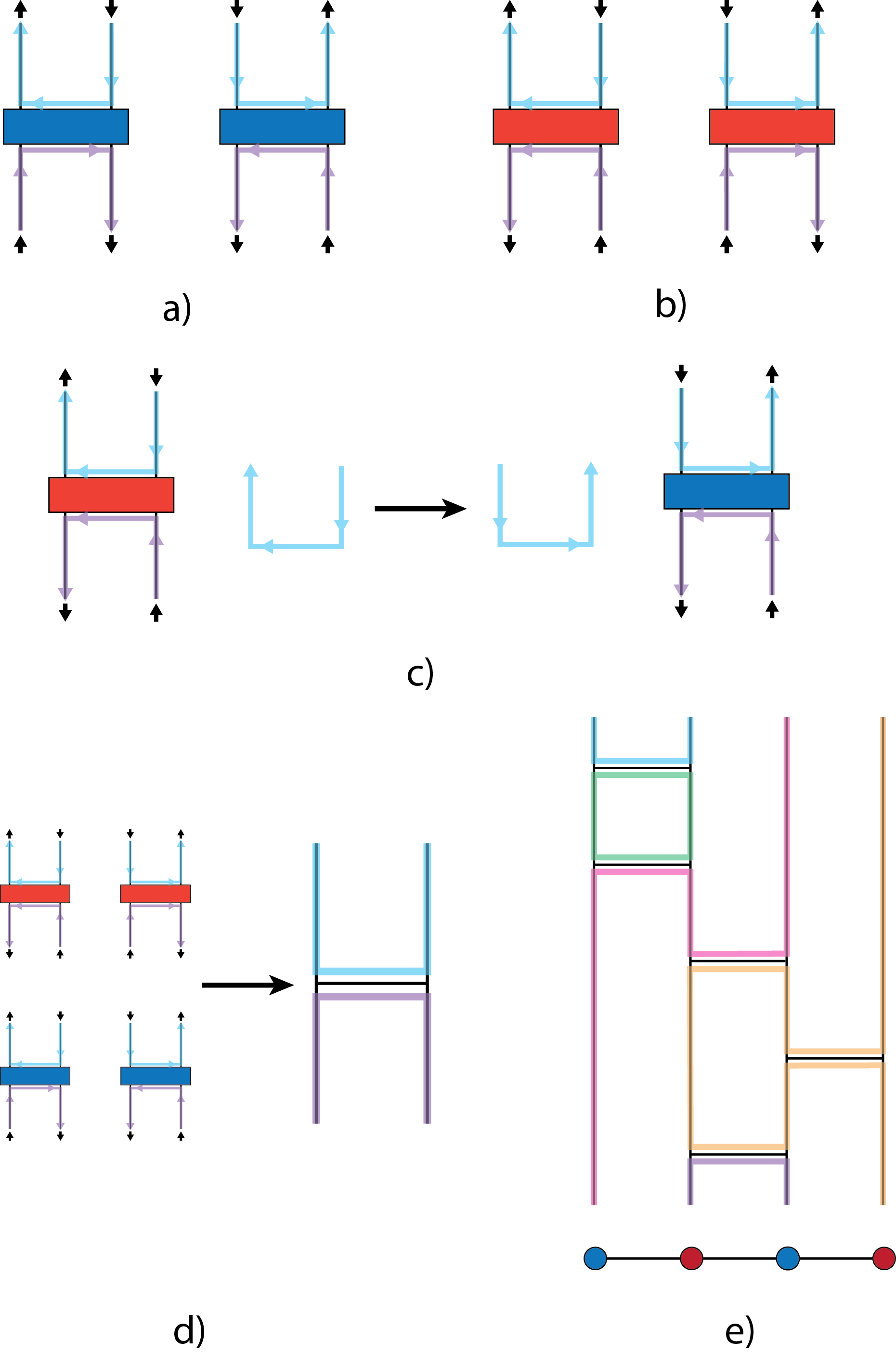}
  \caption{ a) The consistent spin configurations when an $I$ operator appears inserted in the inner product \eqref{eq:ConsistentCount}. The consistent configurations equivalently described by the blue and purple directed loops that make "u-turns" when they encounter an operator. b) The consistent spin configurations and associated directed loops for when an $S$ operator appears in \eqref{eq:ConsistentCount}. c) Consistent configurations can be switched between by reversing the direction of a directed loop, and then interchanging all of the operators that the loop touches from $I\leftrightarrow S$. Note that when the loops are stitched together as in e), if a loop touches both sides of an operator
  then that operator must always be $I$ and does not interchange when the direction of the loop is reversed. 
  d) Given that the total number of consistent configurations is given by $2$ raised to the power of the number of loops, we shift to an representation where all of the possible configurations in a) and b) are represented by un-directed loops and no distinction is made between the $I$ and $S$ operators. e) For a full operator string \eqref{eq:ConsistentSeq}, the local loops are joined together. The example here is for 4 qubits on a line, in the configuration $ x = ((1,2)(1,2),(2,3),(3,4),(2,3))$. Since there are $5$ loops in $x$ the steady state weight of this configuration is proportional to  $2^{L(x)} = 2^5$.  }
  \label{fig:antiFerroLoopExplanation}
\end{figure}

The key property that is used to define the loops is that whenever an operator $W(t)$ is placed on (say) edge $(i,j)\in \mathcal E$, the spins on the $i$ vertex and the $j$ vertex must be anti-aligned, regardless of whether the operator $W(t)$ is an $I$-type or an $S$-type (remember \cref{eq:localhStoqaustic}). This is also true for the spins after the operator is applied. 
In other words, the $z$-basis states $|\sigma_t\rangle$ and $\langle \sigma_{t-1} |$ must both have the spins anti-aligned. The $2^2=4$ different possibilities are depicted in  \Cref{fig:antiFerroLoopExplanation} (a) and (b). 
The two different types of operators are represented with different colored rectangles, and the $Z$-basis configurations are drawn below and above the operators, corresponding to before and after applying the operator $W(t)$. %\footnote{This is called the imaginary time direction in the language of QMC.}. 

The observation here is that if we consider two ``line segments'' that make a U-turn around the operators below and above the operator (rectangle), each line segment always connects anti-aligned spin configurations in the two sublattices, and therefore flipping the $Z$-configuration {\it along an entire line segment} always results in another valid configuration. An example is shown in  \Cref{fig:antiFerroLoopExplanation} (c), where flipping along the blue line in a configuration in (b) results in a configuration in (a) after recoloring the rectangle (i.e., redifining the operator type); both configurations are valid and have the same weight. 
This allows for the construction of ``loops'' in the following manner. Start from any point on any line representing a particular spin $i$ in the diagrammatic description of $W$, and follow the line in the imaginary time direction until an operator $W(t)$ is encountered that connects vertex $i$ with another vertex $j$. Then make a U-turn after crossing the rectangle to the $j$ side, and keep moving in that direction. 
Since the construction of such paths is deterministic once $W$ is specified and never branches, disappears or back tracks, there are only two ways how this procedure can terminate: either the path comes back to the original starting point creating a loop, or has exactly two end points at $\langle\sigma_L|$ or $|\sigma_R\rangle$. 

Now, as shown in \Cref{fig:antiFerroLoopExplanation} (c), the set of consistent configurations of Eq. \eqref{eq:ConsistentCount} with an arbitrary operator acting on a fixed edge can be switched between by reversing the direction of the loops and updating the operator accordingly. This implies that the total number of consistent configurations for a fixed edge sequence is simply the $2^{\text{\# of loops}}$. The local loop rules are then stitched together as in e) for a configuration \cref{eq:ConsistentSeq}.

 We thus define the state space $\Omega$ of these configurations via:
 \begin{align}\label{eq:SSE-loop-state-space}
     \Omega = \{(e_1,...,e_{2B}):e_i\in \mathcal E\}.
 \end{align}
  For a given configuration $x\in\Omega$, we will define $L(x)$ as the function that returns the number of loops in configuration $x$. This allows us to re-express the (effective) partition function $\mathcal Z$ as:
\begin{align}\label{eq:loopz}
    \mathcal Z = \sum_{x\in\Omega}2^{L(x)}.
\end{align}

We show in \Cref{sec:ObsAndEnergyEst} that local Hamiltonian terms (and any other observables that can be incorporated into the loop representation) can be estimated using samples  $x$ drawn from the distribution:
\begin{align}\label{eq:piEquilibrium}
    \pi(x) = \frac{2^{L(x)}}{\mathcal Z}
\end{align}
and so samples from this distribution can be used to estimate the energy of $H$. 

In later explanations, it will occasionally be convenient to imagine that the individual loops can have either of the two $Z$-basis configurations associated with them. 
In such cases, we will refer to the loops with such $Z$-basis configurations attached to be {\it directed} loops (as in \Cref{fig:antiFerroLoopExplanation} a), b) and c)), as opposed to those that are {\it undirected} (\Cref{fig:antiFerroLoopExplanation} e))\footnote{This should not be confused with {\it directed loop updates} commonly used in SSE for anisotropic Heisenberg models \cite{Syljuaasen02Directed}.}. 

 \iffalse 
 In order to sample from \eqref{eq:piEquilibrium}, an ergodic Metropolis-Hastings Markov chain will be employed that works as follows.  
 The Markov chain moves will first choose a number $i\in [2B]$ uniformly at random, then choose an $e_i'\in \mathcal E$ uniformly at random and then propose to move from configuration $x =(e_1,...,e_i,...,e_{2B})\to y = (e_1,...,e_i',...,e_{2B})$, and accept the proposed move with probability $\min \{1,2^{L(y)-L(x)}\}$ so that the transitions satisfy detailed balance.   
\fi

\subsection{Additional terms in the Hamiltonian} 

The previous explanation only considered cases where we have an unweighted bipartite graph $G$ that represents the interaction of the Heisenberg model. 
As we explained in \Cref{sec:notations}, in this work we also consider extensions of the Hamiltonian in the following three different ways: 
\begin{enumerate}
\item the Hamiltonian can have nonuniform weights $w:\mathcal{E}\rightarrow\mathbb{R}_{\geq 0}$ as long as the weights are positive and they satisfy the bipartiteness. 
\item the Hamiltonian can also include Heisenberg exchange interactions between spins on the same side of the bipartition, as long as they are ferromagnetic. 
\item We can also add single-body transverse field terms as explained in Eq. \eqref{eq:transversefield}. 
\end{enumerate}

\subsubsection{Extension with Weights}\label{sec:ExtWeights}
When we have weights $w_{ij}$ for each term $h_{ij}$ in the Heisenberg hamiltonian, that simply transfers to the weights in Eq. \eqref{eq:Z} as 
\begin{equation}
    \mathcal Z =  \frac{1}{2^{N+2B}}\sum_{\sigma_L, \sigma_R}\langle \sigma_L |  \biggl(\sum_{(i,j)\in \mathcal E}w_{ij}\left( {I}_{ij} + {S}_{ij}\right)\biggr)^{2B}\hspace{-2mm} |\sigma_R\rangle.
\end{equation}
Since the edge weights $w_{ij}$ are independent of whether there is an $I$ or an $S$ in the sequence, we can straightforwardly incorporate them into the loop representation. In order to do this, define a function $T$ from an edge $e = (i,j)\in \mathcal E$, to the real numbers via $T(e) = w_{ij}$. If for a given loop representation configuration $x = (e_1,...,e_k,...,e_{2B})$ we define the $k^{\text{th}}$ element via $x_k = e_k$ then the partition function with weights can be written as:
\begin{align}\label{eq:loopzwithweights}
    \mathcal Z = \sum_{x\in\Omega}2^{L(x)}\prod_{k=1}^{2B}T(x_k).
\end{align}
In this case the the steady state weight of a configuration $x$ becomes:
\begin{align}\label{eq:stationarywithweights}
\pi(x) = \frac{2^{L(x)}\prod_{k=1}^{2B}T(x_k)}{\mathcal Z}, 
\end{align}
which trivially includes the unweighted case where $T(x_i)\equiv 1$.

\subsubsection{Ferromagnetic Bonds}
The SPE method can also handle cases with ferromagnetic interactions as long as they only appear within the same sublattice. 
Here, for simplicity, let us first consider the case where all such ferromagnetic terms have uniform weights $v_{ij}\equiv 1$. 

\begin{figure}[h!]
  \centering
  \includegraphics[scale=0.08]{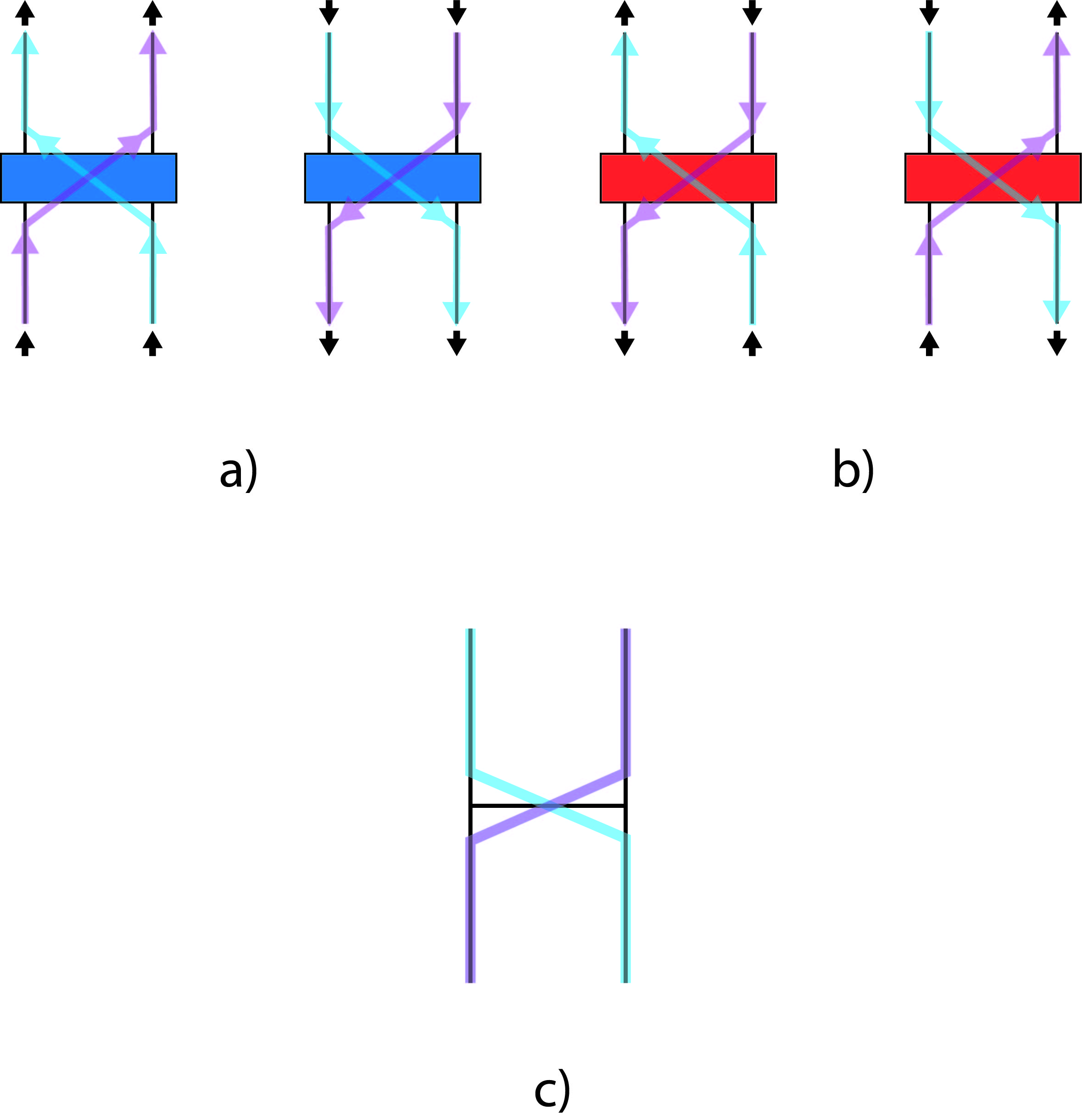}
  \caption{a) The possible spin configurations and associated directed loops for the $I^F$ operator coming from a ferromagnetic term.   b) The possible spin configurations and associated directed loops for the $S$ operator coming from a ferromagnetic term. c) The ferromagnetic term in the un-directed loop representation.}\label{fig:FerroTerms}
\end{figure}

By plugging in Eqs. \eqref{eq:FMterms1} and \eqref{eq:FMterms2} to Eq. \eqref{eq:Z}, we obtain Eq. \eqref{eq:loopz} with a simple modification for the definition of ``loops''. 
Namely, the the ferromagnetic bonds ``swap'' the two lines coming from the same imaginary time direction, as opposed to the antiferromagnetic bonds connecting two lines on the same same imaginary time direction. 
As depicted in \Cref{fig:FerroTerms}, this corresponds to the two possibilities of $I_{ij}^{\mathrm{F}}$ or $S_{ij}$, and is a standard method for incorporating ferromagnetic interactions in SSE \cite{Sandvik1991QMCSpinSys}. Here the state space $\Omega$ becomes the set of all tuples $(e_1,...,e_{2B})$ where now $e_i \in \mathcal E\cup \mathcal E_{\mathrm F}$.

Extending this to nonuniform weights (that satisfy the sign constraint $v_{ij}> 0$) is straightforward. We end up in the same expression as \cref{eq:loopzwithweights,eq:stationarywithweights}, just with the extended rule for loop construction/counting $L(x)$ as we explained above.

\subsubsection{Transverse Fields}\label{subsubsec:TF}
Our SPE also admits transverse field terms in the Hamiltonian, in this case, only if they are stoquastic. Again, we consider the case of uniform weights for explaining ($g_i, w_{ij},v_{ij}\equiv1$), and the extension to the weighted case is straightforward. 

We start from noting that the identity operator $\openone + X_i$ in \cref{eq:transversefield} is there for technical convenience with the loop representation. 
While adding a constant value to the Hamiltonian does not change any of the physics, it enables us to treat the transverse field as an endpoint of the loops (strips) in the loop representation. 

To see how this works, we must first name the identity that comes together with $X_i$ to be $I_i$ and distinguish them. 
With these terms available in $\mathfrak{S}$, each $W(t)$ in \cref{eq:SSEZ} can be an actual identity $I_i$, or a ``flipper'' $X_i$. 
The four different possibilities that can happen at these two operators are depicted in \cref{fig:XLoopExplanation} (a) and (b). Crucially, these all have the same weight because we added $\openone$ for each $X$ term. 

\begin{figure}[t]
  \centering
  \includegraphics[scale=0.33]{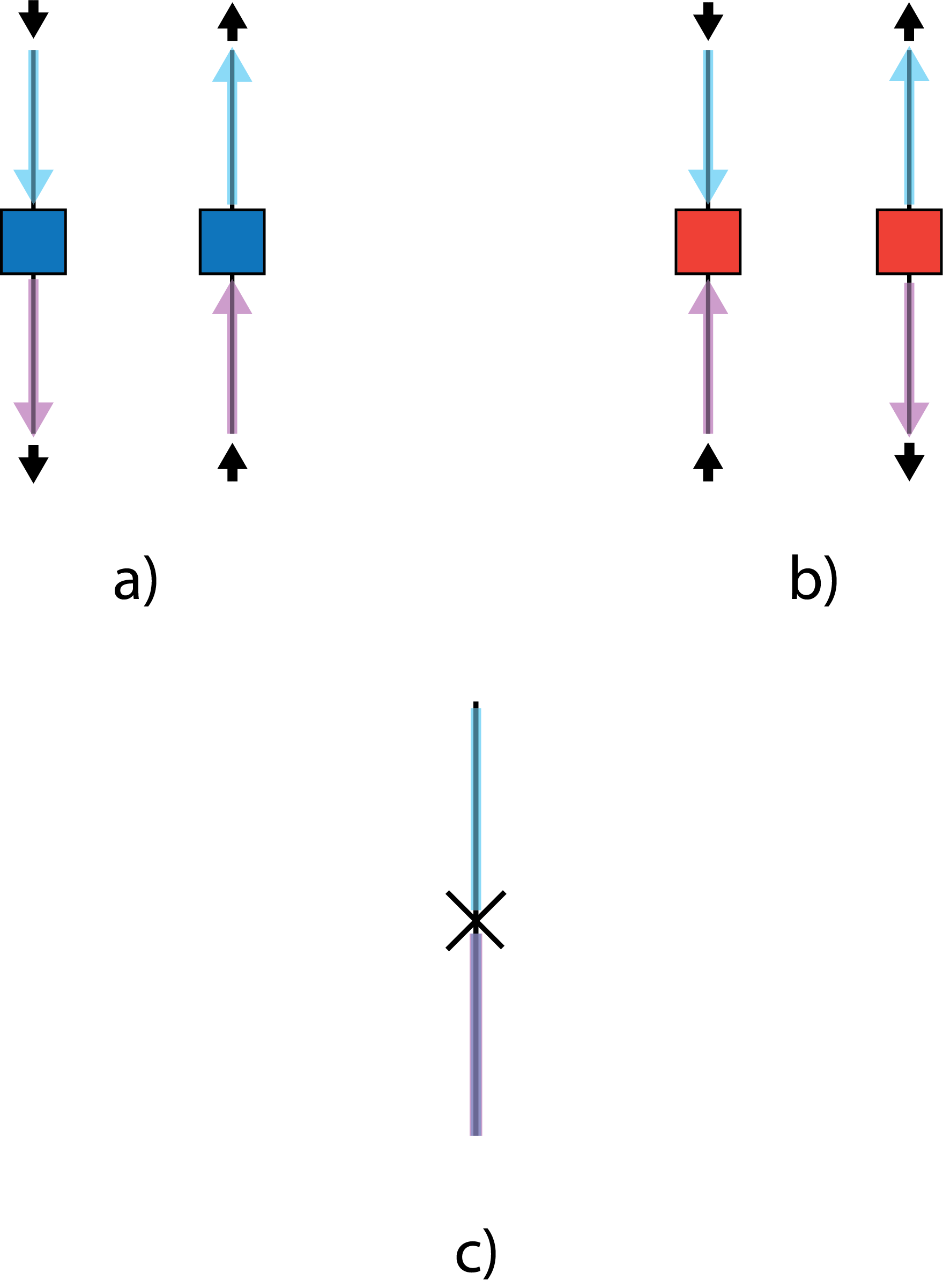}
  \caption{ a) The possible spin configurations and associated directed loops for the  $\mathbbm 1$ operator coming from a transverse field term.   b) The possible spin configurations and associated directed loops for the $X$ operator coming from a transverse field term. c) The transverse field term in the un-directed loop representation. }
  \label{fig:XLoopExplanation}
\end{figure}

Similarly to how we arrived to the loop representation in \cref{sec:looprep} and for the ferromagnetic case,  
we can interpret this situation as having a single type of ``vertex operator'' as in \cref{fig:XLoopExplanation} (c) that can be either $I_i$ or $X_i$ depending on the four different cases of what the upper and lower strings are in the $Z$-basis.
This means we again end up with the same expression Eq. \eqref{eq:loopz} but with an extended space of ``sequences'' $\Omega$. Here, the modified notion of loops $L(x)$ changes with one additional rule that they terminate at these transverse-originated operators. 
Now configurations $x\in\Omega$ can also have terms corresponding to a vertex (as long as a transverse field is applied to it), thus $x=(e_1, e_2, \ldots, e_{2B})$ where $e_i\in\mathcal{E}\cup\mathcal E_{\mathrm F}\cup \mathcal{V}$. 
The loops explained in \cref{fig:antiFerroLoopExplanation} will terminate at endpoints created by those vertex operators, as depicted in \cref{fig:GenericSPEconfig}. $L(x)$ counts the number of such line segments, either those that terminate at endpoints at the boundaries or with the vertex operators, and also proper loops without boundaries. 

In the case where we have nonuniform weights $g_i$, the weights simply includes the factors coming from them which leads to the same expression \cref{eq:loopzwithweights,eq:stationarywithweights} just with the modified way of counting $L(x)$.

\begin{figure}[t]
  \centering
  \includegraphics[scale=0.27]{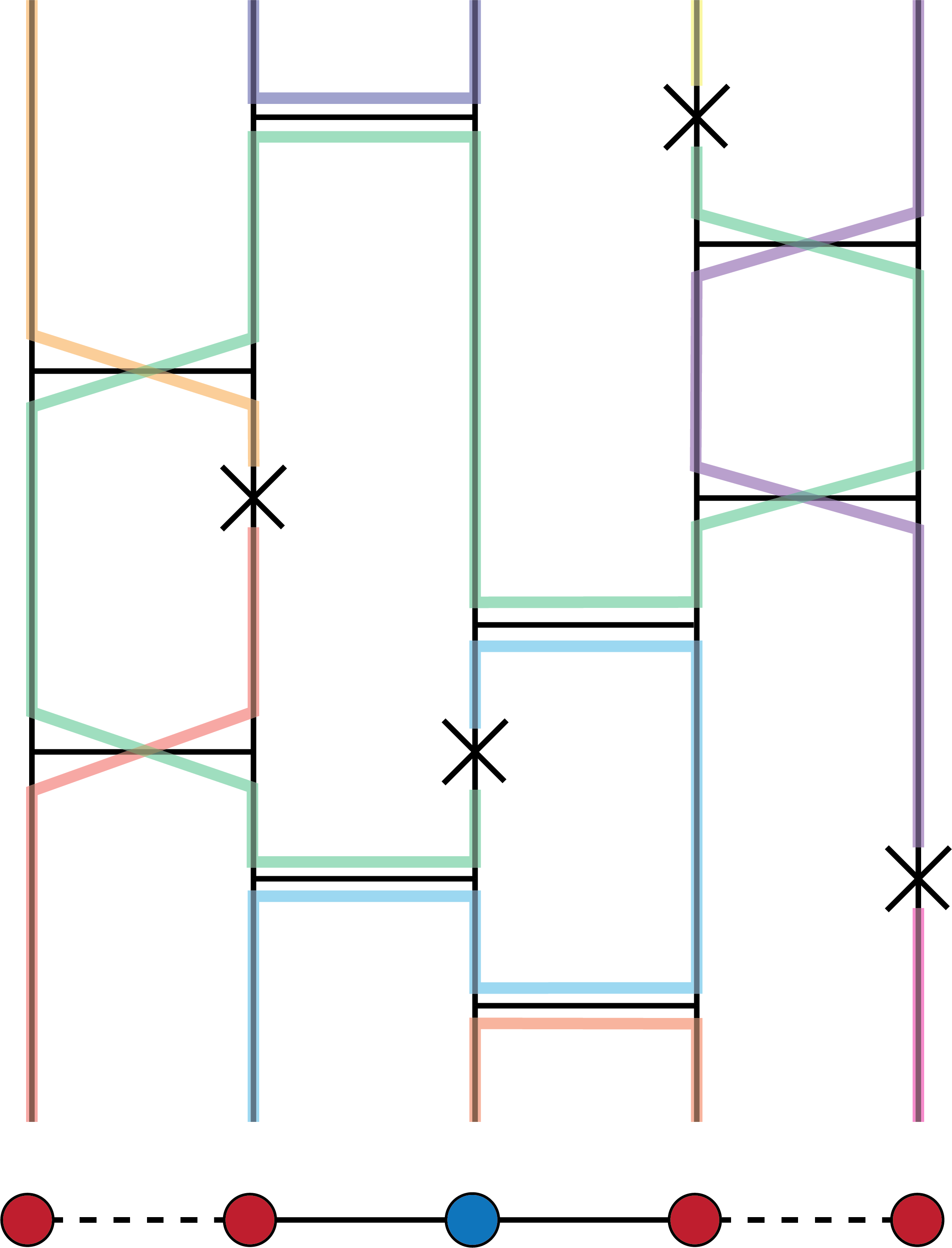}
  \vspace{3mm}
  \caption{An example configuration with included ferromagnetic and transverse field terms. The example here has $2B=12$ operators and $L(x)=9$ loops (line strips). The graph at the bottom shows the interaction of the Heisenberg model: Solid lines for AFM interactions and dotted lines for FM interactions. The vertex colors represent the bipartiteness of the graph.}
  \label{fig:GenericSPEconfig}
\end{figure}

\section{Mixing Time Machinery}\label{sec:mixingtimemachinery}

In this section we review the relevant aspects of Markov chain Monte Carlo (MCMC) and Markov chain mixing analysis. 
We then explain one specific MCMC that will be subject to our analysis that converges to the desired steady-state distribution \cref{eq:stationarywithweights}.

\subsection{Markov Chains and Mixing Times}\label{sec:mcmcbasics}
We consider Markov chains defined by a finite state space $\Omega$, a stochastic matrix $P$ describing transition probabilities between elements of $\Omega$, and a stationary distribution $\pi$ on $\Omega$ that satisfies $\sum_{x\in \Omega} \pi_{x} P_{xy}  = \pi_y$ for all $y\in \Omega$ or $\pi P = \pi$, where we treat the distribution $\pi$ as a horizontal vector.  The transition probabilities satisfy $P_{xy} \geq 0$ and $\sum_{y} P_{xy} = 1$ for all $x,y\in \Omega$.   The probability to go from $x$ to $y$ after $t$ steps of the chain is denoted $P^t(x,y)$, and the distribution obtained after $t$ steps of the chain starting from $x$ is $P^t(x,\cdot)$.   We consider only irreducible chains, which have the property that for all $x,y\in \Omega$ there exists a $t$ such that $P^t(x,y) > 0$, which is equivalent to the state space being connected by the allowed transitions.  Irreducibility implies that $\pi_x > 0$ for all $x\in \Omega$, by the Perron-Frobenius theorem.  We further assume that $P$ is reversible (i.e., satisfies detailed balance), $\pi_x P_{xy} = \pi_y P_{yx}$ for all $x,y\in\Omega$.  Together with irreducibility, this implies that $R_{xy} = \pi_x^{-1/2} P_{xy} \pi_y^{1/2}$ is a symmetric matrix that is similar to $P$, so that $P$ has real eigenvalues.   Since the magnitude of these eigenvalues is at most 1, we may assume that for any chain $P$ we can analyze the closely-related lazy chain $\frac{1}{2}(I+P)$ which has all non-negative eigenvalues.  Under these conditions, $\pi$ is the unique stationary distribution of $P$ starting from any initial state $x\in \Omega$, $\lim_{t\to \infty} P^t(x,\cdot) = \pi$.
 
MCMC is an algorithm for approximately sampling from $\pi$ by simulating the transitions of $P$ for $t$ steps, sampling from $P^t(x,\cdot)$ for a randomly chosen initial states $x$. Therefore the running time of an MCMC algorithm depends on the Markov chain \textit{mixing time}, which an upper bound on the number of steps that must be simulated  to guarantee that that the output distribution is close to $\pi$ regardless of the initial state. The mixing time is defined by:
\begin{equation}\label{eq:tmixdefinition}
    t_{\text{mix}} \coloneq \min \{ t: \max_x \| P^t(x,\cdot)-\pi \|_{\text{TV}}\leq 1/4\}
\end{equation}
where $\|\mu-\nu\|_{\text{TV}} \equiv \frac{1}{2}\sum_x|\mu(x)-\nu(x)|$ is the total variation distance. The total variation distance can be used to bound the differences of expectation values of any observables and thus has an operational meaning for observable estimation procedures.

Few techniques exist that provide upper bounds on the mixing time directly, so it is often helpful to work with a related quantity called the \textit{relaxation time}. The relaxation time is defined as the inverse of the spectral gap of the of the transition matrix, that is:
\begin{equation}
    t_{\text{rel}} \coloneq \frac{1}{1-\lambda^*}
\end{equation}
where $\lambda^*$ is the second largest magnitude eigenvalue of $P$. The mixing time and relaxation times are related by the inequality:
\begin{equation}
    t_{\text{mix}} \leq t_{\text{rel}}\log\frac{4}{\pi_{\min}}\label{eq:relaxmixing}
\end{equation}
where $\pi_{\min}\coloneq \min_x\pi(x)>0$. For distributions where $\pi_{\min}$ is at most exponentially small we have that the mixing and relaxation times are related by a polynomial in the system size.

A powerful method for upper bounding the relaxation time is via the \textit{method of canonical paths}. In order to employ it we define a a graph $G= (V,E)$ (not to be confused with the interaction graph used in the definition of \eqref{eq:HeisenbergHGeneral}) on the the state space of a Markov chain transition matrix $P$ where the vertex set is $V = \Omega$, and the edge set is defined by:
\begin{equation}
    E \coloneq \{(x,y):P(x,y)>0\}.
\end{equation}
Then for each pair of states $x,y$ a path of edges from $x$ to $y$, $\gamma_{xy} = (e_1,...,e_f)$ (again not to be confused with the edges defined in \eqref{eq:SSE-loop-state-space}) such that the first entry of $e_1$ is $x$  and the second entry of $e_f$ is $y$, is chosen with the set of all paths being denoted $\Gamma$. The length of a path $f$ is defined as the total number of edges and will be denoted $|\gamma_{xy}|$. Given a choice of paths $\Gamma$ the relaxation time can be upper bounded in terms of the the \textit{congestion ratio}:
\begin{equation}
    \Phi(\Gamma) \coloneq \max_{(z,w)\in E}\frac{1}{\pi(z)P(z,w)}\sum_{\gamma_{xy}
    \ni (z,w)}\pi(x)\pi(y)|\gamma_{xy}|
\end{equation}
via the inequality:
\begin{equation}\label{eq:relaxcongestion}
    t_{\text{rel}}\leq \Phi(\Gamma).
\end{equation}

To assist in the construction of a good set of paths it is useful to use \textit{encoding functions} which are functions $\eta_{(z,w)}$ from the set of paths that pass through an edge $(z,w)$ into the state space satisfying:
\begin{equation}\label{eq:encoding-func-inequality}
    \pi(x)\pi(y)\leq C \pi(z)\pi(\eta_{(z,w)}(\gamma_{xy})).
\end{equation}
If every encoding function can be guaranteed to be at most $K$ to $1$, this leads to an upper bound on the congestion ratio of:
\begin{equation}\label{eq:congestionwithfactors}
    \Phi(\Gamma) \leq \max_{(z,w)\in E} \frac{KCl}{P(z,w)}
\end{equation}
where $l \equiv \max |\gamma_{xy}|$ is the length of the longest path.

\subsection{SPE Markov chain moves}\label{sec:actualQMCprocedure}
In this work, we consider a reversible Markov chain with Metropolis-Hastings transition probabilities~\cite{hastings1970monte} that is designed to converge to the stationary distribution \eqref{eq:piEquilibrium} or \eqref{eq:stationarywithweights}.  
Due to the simplicity of our target distribution, we can consider a simple transition rule that selects a local operator in the string $(x_1, x_2,\ldots,x_{2B})$ uniformly at random and update it to any other local operator, and the result will have non-zero weight.  In detail, each update is computed as follows:

\setlength{\leftmargini}{3.5em}
\begin{itemize}
\item[Step 1.] Pick one number from 1 to $2B$ uniformly randomly; call it $k$.  
\item[Step 2.] Choose one operator from the set of all possible operators ($\mathcal E \cup \mathcal E_{\mathrm F} \cup \mathcal V$ for the most general case) with uniform probability; call it $o$. 
\item[Step 3.] Calculate the ratio of the two configuration probabilities $\pi_{\mathrm{next}}/\pi_{\mathrm{now}}$ where $\pi_{\mathrm{now}}$ is the equilibrium probability of the current configuration and $\pi_{\mathrm{next}}$ is that of the configuration after changing the $k$-th operator in the sequence to $o$, accordingly to \cref{eq:stationarywithweights} or \cref{eq:piEquilibrium}. 
\item[Step 4.] Change the configuration to the latter with probability $\min \{1, \pi_{\mathrm{next}}/\pi_{\mathrm{now}}\}$, otherwise do nothing. 
\end{itemize}
The above 4 steps constitute a single {\it Markov Chain step} in our algorithm, and correspond to applying the stochastic matrix $P$ in the explanation of the mixing time machinery. These transitions suffice to make the chain ergodic, %and it can further be made aperiodic by adding a 1/2 probability of not moving at each time step. 
and is also not difficult to see that it is aperiodic. 
Note that all steps only require polynomial amount of floating-point numerical computation; Most importantly we do not need to explicitly calculate $\pi_{\mathrm{now}}$ and $\pi_{\mathrm{next}}$ which are both exponentially small numbers, but rather only need their {\it ratio} which can be easily be seen to be only polynomially large and efficiently computable.

\section{Proof of Main Result}\label{sec:mainproof}
In this section we establish rapid mixing of the SPE QMC for Hamiltonians of the form \eqref{eq:fullHamiltonian} on $O(1)$-bipartite graphs, using the canonical path technique we introduced in \cref{sec:mcmcbasics}.  
The SPE QMC samples the distribution defined in \cref{sec:introduceQMC} with the MCMC procedure defined in \cref{sec:actualQMCprocedure}. 
We define the set of canonical paths in  \Cref{sec:Canonical-Paths}, and establish the geometrical facts we will need about the state space of loop configurations in \Cref{sec:looptopology}.   Then the strategy is to establish a value of $C$ for the encoding inequality \eqref{eq:encoding-func-inequality}  applied to our set of canonical paths, first for the uniform weight AFM Hamiltonian \eqref{eq:SimpleHam} in  \Cref{sec:Anti-Ferro-magnet-Proof}, and then with inclusion of ferromagnetic terms, local $X$ terms, and  non-uniform weights as in \eqref{eq:fullHamiltonian} in \Cref{sec:Ferro-and-Local}.  Finally, we assemble the encoding inequality bounds along with the other factors in \eqref{eq:congestionwithfactors} and \eqref{eq:tmixdefinition} to obtain explicit bounds on the mixing time in \Cref{sec:full-upper-bound}.

\subsection{Canonical paths between configurations}\label{sec:Canonical-Paths}
We define the path between two bridge configurations $x, y \in \Omega$ as the sequence $z(t)\in\Omega$ for  $t\in[0,2B]$ where
$z(t) = (y_1,...,y_{t},x_{t+1},...,x_{2B}) \eqqcolon (y_{[t]}x_{[t+1,2B]})$
so that $z(0) = x$ and $z(2B) = y$.
The canonical path will then be the sequence of edges
\begin{align}\label{eq:canonical-path-SSE}
\gamma_{xy} = \big(z(t-1),z(t)\big)_{t=1}^{2B}.
\end{align}
The corresponding encoding function is defined to be
\begin{align}\label{eq:encoding-func-SSE}
    \eta_{(z(t-1),z(t))}(\gamma_{xy}) &= ~(x_1,...,x_{t-1},y_{t},...,y_{2B})\\
     &\eqqcolon   (x_{[t-1]}y_{[t,2B]})
\end{align}
 Recall that the goal is for the encoding function to map all canonical paths passing through the arbitrary edge $(z(t-1),z(t))$ to distinct points within the state space, $\gamma_{xy} \mapsto \eta_{(z(t-1),z(t))}(\gamma_{xy})$.   From the edge $(z(t-1),z(t))$ we have $z(t)=(y_{[t]}x_{[t+1,2B]})$, so that along with the encoded state $\eta_{(z(t-1),z(t))}(\gamma_{xy}= (x_{[t-1]}y_{[t,2B]}))$, we can uniquely recover the endpoints $x,y$ and hence the path $\gamma_{xy}$.  Hence the encoding function is invertible and therefore one-to-one.

\subsection{Bipartite Loop Topology}\label{sec:looptopology}

Here we establish some facts about the the loop representation on bipartite graphs that will prove useful. Consider the Heisenberg model defined over a bipartite graph with sublattices given by $\mathcal A$ and $\mathcal B$. Following Eq.\eqref{eq:ConsistentSeq}, for a given directed loop, all of the participating spins in the $\mathcal A$ sublattice must point in the same direction while all of the spins in the $\mathcal B$ sublattice must point in the opposite direction. This follows because the loop can only switch from one of the sublattices to the other when an operator bridge is encountered, where the spins must point in the other direction after crossing the bridge. Since by definition there are no operator bridges between sites within $\mathcal A$ or $\mathcal B$, the spins on each side of the bi-partition must always be opposite to each other if they are part of the same loop. 
This is very clear when we consider the $z$-basis configuration, but the same logic holds even if we move to the purely geometric loop representation. We can imagine a directionality in the loops, and it must always satisfy the above parity constraint due to the bipartiteness of the graph.

For the SPE considered here with open boundary conditions, this leads to strong constraints on the topology of the allowed loops, especially when the sublattices differ significantly in their number of sites. 
\begin{fact}\label{fct:LoopTermReq1}
    Let $\mathcal G = (\mathcal V, \mathcal E)$ be a bipartite graph describing an anti-ferromagnetic Heisenberg model as in \eqref{eq:HeisenbergHGeneral} with the two sides of the bi-partition denoted by $\mathcal A\subset \mathcal V$ and $\mathcal B\subset \mathcal V$. Then in the loop representation of the model, loops that begin at the top boundary in one of the sublattices of the bi-partition must either terminate at the top boundary in the other sublattice, or at the bottom boundary of the same sublattice.      
\end{fact}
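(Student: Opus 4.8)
The plan is a short parity argument along the loop, built from two elementary facts about what happens at each operator bridge. First I would fix the set-up: an open loop (line segment) has two endpoints, each sitting on the top or bottom boundary on a definite site; and a loop incident to the top boundary runs \emph{downward} into the bulk from that endpoint (it is just the vertical segment joining the top boundary to the first operator on that site, or to the bottom boundary if there is none). Thus traversing away from a top endpoint one moves downward in imaginary time, traversing away from a bottom endpoint one moves upward, and conversely arriving at a top endpoint one is moving upward, arriving at a bottom endpoint one is moving downward.

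The first elementary fact (Step 1) is that crossing a bridge \emph{reverses} the direction of imaginary-time travel: by the local loop rules of \Cref{fig:antiFerroLoopExplanation}(a)--(b), an operator joins its two ``below'' ports to each other and its two ``above'' ports to each other (the Temperley--Lieb cap/cup diagram), so a traversal arriving at an operator from below is sent back downward and one arriving from above is sent back upward, independently of whether the operator is $I$- or $S$-type. The second elementary fact (Step 2) is the bipartiteness statement already established just above the Fact: since bridges connect only $\mathcal A$ to $\mathcal B$, crossing a bridge \emph{switches} the sublattice of the current segment, whereas a vertical segment keeps both the sublattice and the direction of travel fixed. Combining these, if one walks along the loop from one endpoint $p$ to the other endpoint $q$ and lets $k$ be the number of bridges crossed, then the sublattice of $q$ agrees with that of $p$ iff $k$ is even, and the direction in which one is moving upon reaching $q$ is reversed relative to the direction leaving $p$ iff $k$ is odd.

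Step 3 is to specialize to the case of interest. Let $p$ be a top endpoint in, say, $\mathcal A$; then we leave $p$ moving downward. If $q$ also lies on the top boundary, then we reach $q$ moving upward, so $k$ is odd, so $q$'s sublattice differs from $p$'s, i.e.\ $q$ is a top endpoint in $\mathcal B$. If instead $q$ lies on the bottom boundary, we reach $q$ moving downward, so $k$ is even, so $q$ is in $\mathcal A$, i.e.\ a bottom endpoint in the same sublattice. The case $p$ in $\mathcal B$ is identical after exchanging $\mathcal A\leftrightarrow\mathcal B$. This is exactly the claim, and it also makes explicit that the two excluded possibilities are ``top boundary in the same sublattice'' and ``bottom boundary in the other sublattice''.

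I expect the only real care to be in Steps 1--2: writing down the directed-loop / time-direction bookkeeping precisely enough that ``reverses direction'' and ``switches sublattice'' are unambiguous, and checking the degenerate configurations (a bare top-to-bottom vertical segment has $k=0$ and indeed joins two endpoints in the same sublattice on opposite boundaries; and a loop that passes through both ports of a single operator contributes to $k$ in the obvious way and does not disturb the count). Once that is pinned down the result is a one-line parity argument requiring no estimates.
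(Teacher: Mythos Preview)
Your proposal is correct and is essentially the same argument as the paper's. The paper phrases it as a traversal invariant---starting from the top of $\mathcal A$, one always moves down while on $\mathcal A$ and up while on $\mathcal B$---which is exactly your observation that each bridge crossing simultaneously flips the direction of travel and the sublattice; your parity count on the number $k$ of bridges is just a slightly more explicit bookkeeping of that same invariant.
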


This follows from how the directed loops are constructed as in \Cref{fig:antiFerroLoopExplanation}. If the loop starts in at the top boundary of $A$, it moves towards the bottom until it encounters an operator bridge, whereby it moves to the $B$ sublattice and starts moving towards the top. In this case the loop can only move down on the $A$ sublattice and up on the $B$ sublattice, which means that it must either terminate at the bottom boundary of $A$ or the top boundary of $B$. Note that this procedure is entirely deterministic once the operator string is set.

\begin{fact}\label{fct:LoopTermReq2}
    Let $\mathcal G = (\mathcal V, \mathcal E)$ be a bipartite graph describing an anti-ferromagnetic Heisenberg model as in \eqref{eq:HeisenbergHGeneral} with the two sides of the bi-partition denoted by $\mathcal A\subset \mathcal V$ and $\mathcal B\subset \mathcal V$ such that $|\mathcal B|\geq|\mathcal A|$. Then in the loop representation of the model, at least $|\mathcal B| -|\mathcal A|$ of the loops that start at the left boundary in the $\mathcal B$ sublattice must terminate at the right boundary in the $\mathcal B$ sublattice.         
\end{fact}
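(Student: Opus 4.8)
The plan is to prove the statement by a direct bookkeeping of loop endpoints at the two open boundaries of the operator string, which I will call ``left'' ($\langle\sigma_L|$) and ``right'' ($|\sigma_R\rangle$) — these are the same two boundaries that \Cref{fct:LoopTermReq1} calls ``top'' and ``bottom''. I would first observe that in any configuration $x\in\Omega$ each of the $N$ lines reaching the left boundary (one per qubit) is an endpoint of exactly one open loop, and likewise at the right boundary, so there are exactly $2N$ loop endpoints and hence exactly $N$ open loops (closed loops touch neither boundary and are irrelevant here). Labelling every endpoint by the pair (boundary, sublattice) $\in\{\mathrm L,\mathrm R\}\times\{\mathcal A,\mathcal B\}$, there are then exactly $|\mathcal A|$ endpoints with label $(\mathrm L,\mathcal A)$, $|\mathcal A|$ with $(\mathrm R,\mathcal A)$, $|\mathcal B|$ with $(\mathrm L,\mathcal B)$, and $|\mathcal B|$ with $(\mathrm R,\mathcal B)$.

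Next I would apply \Cref{fct:LoopTermReq1} at each of the two boundaries to conclude that the only endpoint pairs an open loop can realize are the four types: (i) $(\mathrm L,\mathcal A)$ with $(\mathrm L,\mathcal B)$, (ii) $(\mathrm R,\mathcal A)$ with $(\mathrm R,\mathcal B)$, (iii) $(\mathrm L,\mathcal A)$ with $(\mathrm R,\mathcal A)$, and (iv) $(\mathrm L,\mathcal B)$ with $(\mathrm R,\mathcal B)$; denote their numbers by $a,b,c,d$ respectively. Every other pairing — for instance $(\mathrm L,\mathcal A)$ with $(\mathrm R,\mathcal B)$, or two endpoints at the same boundary in the same sublattice — is excluded by the monotone-traversal picture behind \Cref{fct:LoopTermReq1}: leaving the left boundary a loop moves inward along every $\mathcal A$-line and outward along every $\mathcal B$-line (or vice versa), and symmetrically at the right boundary. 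Counting the endpoints carrying each of the four labels now gives the linear system $a+c=|\mathcal A|$, $b+c=|\mathcal A|$, $a+d=|\mathcal B|$, $b+d=|\mathcal B|$.

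Finally, the first two equations force $a=b=|\mathcal A|-c$, and substituting into the third gives the exact identity $d=|\mathcal B|-|\mathcal A|+c$. Since $c\ge 0$ this yields $d\ge|\mathcal B|-|\mathcal A|$, and $d$ is by construction exactly the number of loops that start at the left boundary in $\mathcal B$ and terminate at the right boundary in $\mathcal B$, which is the claim. Essentially everything here is routine once \Cref{fct:LoopTermReq1} is available; the only step that deserves care is verifying that the four endpoint-pair types above are \emph{exhaustive}, which is where one must invoke the directionality/monotonicity argument at both boundaries jointly rather than at only one, and I do not anticipate any serious obstacle beyond stating this carefully.
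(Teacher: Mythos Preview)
Your proof is correct and follows essentially the same approach as the paper: both use \Cref{fct:LoopTermReq1} to restrict the possible termination points of open loops and then count. The paper phrases it as a one-line pigeonhole argument (loops from the left $\mathcal B$ boundary can only land at left-$\mathcal A$ or right-$\mathcal B$, and there are only $|\mathcal A|$ left-$\mathcal A$ slots), whereas you set up the full linear system and extract the sharper identity $d=|\mathcal B|-|\mathcal A|+c$; this is a harmless elaboration of the same idea.
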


This follows as due to \Cref{fct:LoopTermReq1}, loops starting at the left boundary $\langle \sigma_L|$ of $\mathcal B$ can either terminate at the left boundary in $\mathcal A$ or must terminate at the right boundary $|\sigma_R\rangle$ in $\mathcal B$. Since there are at most $|\mathcal A|$ sites for the loops starting in $\mathcal B$ to terminate on the left boundary, by the pigeonhole principle, the remaining $|\mathcal B| - |\mathcal A|$ must terminate at the right boundary in the $\mathcal B$ sublattice.

\begin{lemma}\label{lem:LoopTermReq3}
    Let $\mathcal G = (\mathcal V, \mathcal E)$ be a bipartite graph describing an anti-ferromagnetic Heisenberg model as in \eqref{eq:HeisenbergHGeneral} with the two sides of the bi-partition denoted by $\mathcal A\subset \mathcal V$ and $\mathcal B\subset \mathcal V$ such that $|\mathcal B|\geq|\mathcal A|$. Then in the loop representation of the model, consider a "cut" as in \Cref{fig:encoding-inequality-example} in the spatial direction. For any given configuration and any cut the amount of loops passing through the cut can at most differ by $2|\mathcal A|-1$. \end{lemma}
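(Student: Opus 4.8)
The plan is to bound, uniformly in the cut position, the number of loops that cross a cut. Fix a configuration $x\in\Omega$, and for $t\in\{0,1,\dots,2B\}$ let $\ell(t)$ denote the number of distinct loops of $x$ that cross the vertical cut placed between the $t$-th and $(t+1)$-th operators, as in \Cref{fig:encoding-inequality-example}. Since each of the $N$ sites contributes exactly one line segment across any such cut, trivially $\ell(t)\le N$; the claim will follow once we show $N-2|\mathcal A|\le\ell(t)\le N-1$ for every $t$, because then $|\ell(t)-\ell(t')|\le 2|\mathcal A|-1$ for any two cuts $t,t'$. (The statement is vacuous unless $\mathcal E\neq\emptyset$, which we assume; then $|\mathcal A|\ge 1$ and $B\ge 1$.)

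For the lower bound I would refine the endpoint bookkeeping behind \Cref{fct:LoopTermReq1,fct:LoopTermReq2}. Let $a$ be the number of open strings of $x$ whose two endpoints both lie on the left boundary. By \Cref{fct:LoopTermReq1} each such string has one endpoint in $\mathcal A$ and one in $\mathcal B$, so these $a$ strings occupy $a$ of the $|\mathcal A|$ left-boundary slots in $\mathcal A$ (hence $a\le|\mathcal A|$) and $a$ of the $|\mathcal B|$ left-boundary slots in $\mathcal B$; every remaining left-boundary slot is an endpoint of a string that must terminate on the right boundary, so $x$ contains exactly $N-2a\ge N-2|\mathcal A|$ ``full-length'' strings, each spanning the entire imaginary-time direction and hence crossing every cut. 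Being distinct loops, they force $\ell(t)\ge N-2a\ge N-2|\mathcal A|$ for all $t$.

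The harder half is the upper bound $\ell(t)\le N-1$, and this is where the open boundary is essential. I would isolate one short string anchored to the left boundary: the operator at slice $1$ is the first operator on both of its sites, so its lower U-turn joins the two initial segments of those sites into a single open string $\xi$ whose two endpoints are exactly their left-boundary slots (one in $\mathcal A$, one in $\mathcal B$). Now fix any cut. If $\xi$ crosses it, then because both endpoints of $\xi$ lie strictly on the ``before'' side of the cut, $\xi$ crosses it an even, hence at least $2$, number of times, so at least two of the $N$ segments at the cut lie in the single loop $\xi$. If instead $\xi$ does not cross the cut, then $\xi$ lies entirely before it and pairs two left-boundary slots with each other without reaching the cut; regarding the ``before'' region as a tangle joining the $N$ left-boundary slots to the $N$ cut slots, at most $N-2$ of the cut slots can then be matched through it to distinct left-boundary slots, so at least $2$ cut slots are matched to one another and thus share a loop. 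In either case two segments at the cut belong to the same loop, so $\ell(t)\le N-1$.

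Putting the two bounds together gives $\ell(t)\in[N-2|\mathcal A|,\,N-1]$ for every cut, whence any two cuts differ by at most $2|\mathcal A|-1$. I expect the real content of the argument — and the reason the bound is $2|\mathcal A|-1$ rather than $2|\mathcal A|$ — to be precisely this upper bound: one must notice that the very first operator manufactures a string pinned to the left boundary that is topologically forced, at every cut, either to waste a crossing or to tie up a pair of boundary slots. The local estimate $|\ell(t)-\ell(t\pm1)|\le 1$ across a single operator is true and intuitive, but, somewhat surprisingly, is not needed for this global bound. One should also dispatch the degenerate cases ($\mathcal E=\emptyset$, or $|\mathcal A|=0$) in which the statement is empty.
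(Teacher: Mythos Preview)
Your argument is correct and shares the paper's overall strategy: bound the number $\ell(t)$ of loops crossing an arbitrary cut above and below, with the lower bound driven by the full-length $\mathcal B$-to-$\mathcal B$ strings of \Cref{fct:LoopTermReq1,fct:LoopTermReq2}. The difference is in where the ``$-1$'' comes from. The paper keeps the trivial upper bound $\ell(t)\le N$ and argues that the at most $2|\mathcal A|$ residual segments always contribute at least one extra loop, pushing the lower bound to $|\mathcal B|-|\mathcal A|+1$. You instead keep the lower bound at $N-2|\mathcal A|$ and sharpen the \emph{upper} bound to $N-1$ via a genuinely new ingredient: the string $\xi$ pinned to the left boundary by the very first operator, together with the parity between left--left and cut--cut arcs in the ``before'' tangle. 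This is arguably the more robust route to the $-1$, since a single full-length loop can in fact occupy all $N$ segments at some interior cut (so the paper's ``residual'' loop need not exist as a separate loop). One small edge case to tidy up is the cut at $t=0$: there your two-case analysis does not apply verbatim because the endpoints of $\xi$ sit \emph{on} the cut rather than strictly before it, but the conclusion is immediate since those two left-boundary slots are themselves two cut-segments belonging to the single loop $\xi$.
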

\begin{proof}
    From  \Cref{fct:LoopTermReq2}, every cut must have at least $|\mathcal B| -|\mathcal A|$ loops passing through it as these loops must get from the left boundary to the right boundary. These loops pass through a minimum of $|\mathcal B| -|\mathcal A|$ sites, leaving at most $2|\mathcal A|$ sites for the remaining loops to pass through. On one hand every site could correspond to its own loop, yielding $|\mathcal B| +|\mathcal A|$ loops passing through. On the other extreme, all of the $2|\mathcal A|$ sites could be connected in a single loop, which would yield $|\mathcal B| -|\mathcal A| +1$ loops. The difference of the number of loops in these two extremal configurations is $2|\mathcal A| -1$.  
\end{proof}

\iffalse An encoding function $\eta$ for a canonical path $\Gamma$ is a set of at most polynomially degenerate\footnote{In the original paper and lecture notes, people write injective, but this turns out not to be necessary, and in fact allows degeneracies as long as they are only polynomially many.} functions $\eta_{e} : \{\Gamma_{xy}: \Gamma_{xy}\ni e\}\rightarrow \mathcal{S}_b$ (for each edge $e$) that satisfies $\pi(x)\pi(y)\leq C \pi(z)\pi(\eta_{e}(x,y))$ for all $x, y$  that $\Gamma_{xy}\ni e=(z,z^{\prime})$. 
Here $C$ is an at most polynomially large constant factor. If such an encoding $\eta$ is found, this implies that the canonical path has a polynomial congestion ratio, thus the original MCMC mixes rapidly \cite{ref:encoding}. 
\fi

\subsection{The Unweighted Anti-Ferromagnet}\label{sec:Anti-Ferro-magnet-Proof}

We now turn towards using these topological restrictions to show that the constant $C$ in \eqref{eq:encoding-func-inequality} can be upper bounded, implying rapid mixing when this technique is applied to the unweighted anti-ferromagnetic model \eqref{eq:SimpleHam} defined over $O(1)$-bipartite graphs. 

\begin{lemma}\label{lem:loopdiffbound}
    For the SPE MCMC on an unweighted bipartite anti-ferromagnetic Heisenberg Hamiltonian with the smaller part of the bipartition labeled as the set $\mathcal A$, the constant $C$ in equation \eqref{eq:encoding-func-inequality} can be upper bounded by $2^{4|\mathcal A| - 2}$. 
\end{lemma}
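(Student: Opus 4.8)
The plan is to bound the ratio $\pi(x)\pi(y)/\bigl(\pi(z)\pi(\eta_{(z,w)}(\gamma_{xy}))\bigr)$ pointwise, where $(z,w) = (z(t-1),z(t))$ is the edge along the canonical path, $z = z(t-1) = (y_{[t-1]}x_{[t,2B]})$ (or $z(t) = (y_{[t]}x_{[t+1,2B]})$ — I will use whichever of the two endpoints of the edge is convenient), and $\eta_{(z,w)}(\gamma_{xy}) = (x_{[t-1]}y_{[t,2B]})$. Since $\pi(v) = 2^{L(v)}/\mathcal Z$ for every configuration $v$ in the unweighted case, the partition function $\mathcal Z$ cancels and the inequality \eqref{eq:encoding-func-inequality} reduces to the purely combinatorial claim
\begin{equation}
L(x) + L(y) \leq (4|\mathcal A|-2) + L(z) + L(\eta_{(z,w)}(\gamma_{xy})).
\end{equation}
So the whole task is to show that the number of loops cannot decrease by more than $4|\mathcal A|-2$ when we pass from the pair $(x,y)$ to the pair $(z,\eta)$.

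First I would observe that $x$ and $z$ agree on all operators in time-slices $t,\dots,2B$ (the ``right part''), and differ only on the left part $1,\dots,t-1$ where $x$ has $x_{[t-1]}$ and $z$ has $y_{[t-1]}$; symmetrically $\eta$ and $z$ agree on the left part $1,\dots,t-1$ and differ on the right part. The key structural point is that the loop structure of any configuration, restricted to a ``cut'' at a fixed imaginary-time location, is governed by a matching of the $N$ spatial sites across that cut, and by Fact~\ref{fct:LoopTermReq1} and especially Lemma~\ref{lem:LoopTermReq3}, the number of loops passing through any cut lies in a window of width $2|\mathcal A|-1$. I would make a cut at the interface between time-slice $t-1$ and time-slice $t$, and decompose each of the four configurations $x,y,z,\eta$ into its ``left block'' (slices $1,\dots,t-1$) and ``right block'' (slices $t,\dots,2B$), viewing each block as inducing a pairing on the $N$ cut-sites together with some internal closed loops. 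By construction $x$ and $\eta$ have the same left block, $z$ and $y$ have the same left block; $x$ and $z$ have the same right block, $\eta$ and $y$ have the same right block. Writing the total loop count of a configuration glued from a left block with pairing $\pi_L$ and internal count $c_L$ and a right block with pairing $\pi_R$ and internal count $c_R$ as $c_L + c_R + \ell(\pi_L,\pi_R)$, where $\ell(\pi_L,\pi_R)$ is the number of cycles in the union of the two matchings on the cut-sites, the $c_L$ and $c_R$ contributions cancel exactly in $L(x)+L(y)-L(z)-L(\eta)$, leaving
\begin{equation}
L(x)+L(y)-L(z)-L(\eta) = \ell(\pi_L^x,\pi_R^x) + \ell(\pi_L^y,\pi_R^y) - \ell(\pi_L^y,\pi_R^x) - \ell(\pi_L^x,\pi_R^y).
\end{equation}
Now each of these four $\ell$-values is the number of loops crossing the cut for the respective glued configuration (up to the trivial relation between cycles-of-two-matchings and crossing loops), so each lies in an interval of length $2|\mathcal A|-1$ by Lemma~\ref{lem:LoopTermReq3}; the two positive terms are at most the top of their windows and the two negative terms at least the bottom, giving a difference at most $2(2|\mathcal A|-1) = 4|\mathcal A|-2$, which is exactly the claimed bound on $C$ (after exponentiating).

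The main obstacle I anticipate is making the ``gluing'' decomposition fully rigorous: one has to check that the left block of $x$ really does induce the same matching-plus-internal-loops data as the left block of $\eta$ (true because they are literally the same operator sequence on slices $1,\dots,t-1$, but one must be careful that the loop-tracing procedure only ever sees the operators, not the boundary states), and that the number of closed internal loops of a block is independent of what it is glued to — again true but needs the observation that internal closed loops never touch the cut. One also has to be slightly careful at the two outermost boundaries $\langle\sigma_L|$ and $|\sigma_R\rangle$, since loops there are open strings rather than cycles; I would handle this by treating the two boundary endpoints as $2N$ extra ``stub'' sites that are paired in a fixed canonical way, so that open strings become cycles and Lemma~\ref{lem:LoopTermReq3} applies uniformly. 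A secondary subtlety is confirming that Lemma~\ref{lem:LoopTermReq3}'s bound, which was stated for a cut of a single configuration, applies to each of the four \emph{glued} configurations $(\pi_L^x,\pi_R^x), (\pi_L^y,\pi_R^x), (\pi_L^x,\pi_R^y), (\pi_L^y,\pi_R^y)$ — this is fine because each such gluing is itself a legitimate configuration in $\Omega$ (any left operator-block concatenated with any right operator-block is again an element of $\Omega$, since $\Omega$ has no constraints), so the lemma applies verbatim.
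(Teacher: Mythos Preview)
Your proposal is correct and follows essentially the same approach as the paper's proof: partition the loops in each of the four configurations according to whether they cross the cut between slices $t-1$ and $t$, observe that the non-crossing contributions cancel pairwise (since $\{x,\eta\}$ share the left block and $\{y,z\}$ share the left block, and symmetrically on the right), and then bound the remaining crossing-loop discrepancy by invoking Lemma~\ref{lem:LoopTermReq3} twice to get $2(2|\mathcal A|-1)=4|\mathcal A|-2$. Your write-up is in fact more explicit than the paper's --- the decomposition $L = c_L + c_R + \ell(\pi_L,\pi_R)$ via boundary matchings makes the cancellation transparent --- but the underlying argument is the same.
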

\begin{proof}
If we insert \eqref{eq:encoding-func-SSE} into \eqref{eq:encoding-func-inequality} for the path $\gamma_{xy}$ and edge $(z(t-1),z(t))$ as in \eqref{eq:canonical-path-SSE} obtain the inequality:
\begin{align}\label{eq:inequality-encoding-function-SSE}
    &\pi(x_{[t-1]}x_{[t,2B]})\pi(y_{[t-1]}y_{[t,2B]})\\
    \leq &C \pi(y_{[t-1]}x_{[t,2B]})\pi(x_{[t-1]}y_{[t,2B]}).
\end{align}

As demonstrated in \Cref{fig:encoding-inequality-example}, we now divide the loops in the configurations appearing above into into \textit{internal} loops, which pass through the cut between operators $t-1$ and $t$ in the sequence, and \textit{external} loops which do not pass through the cut. The external loops will appear in both sides of the inequality and will cancel out, meaning only the internal loops can contribute to the inequality. Since by \Cref{lem:LoopTermReq3} the max difference in the number of internal loops for any cut is upper bounded by a $2|\mathcal A|-1$, the worst case scenario is when $x$ and $y$ both have $2\mathcal|A|-1$ more loops than $y_{[t-1]}x_{[t,2B]}$ and $y_{[t-1]}x_{[t,2B]}$, which means that \eqref{eq:encoding-func-inequality} can always be satisfied if:
\begin{align}
    C = 2^{4|\mathcal A| - 2}.
\end{align}
\end{proof}

\begin{figure}[t]
    \centering
    \includegraphics[scale=0.085]{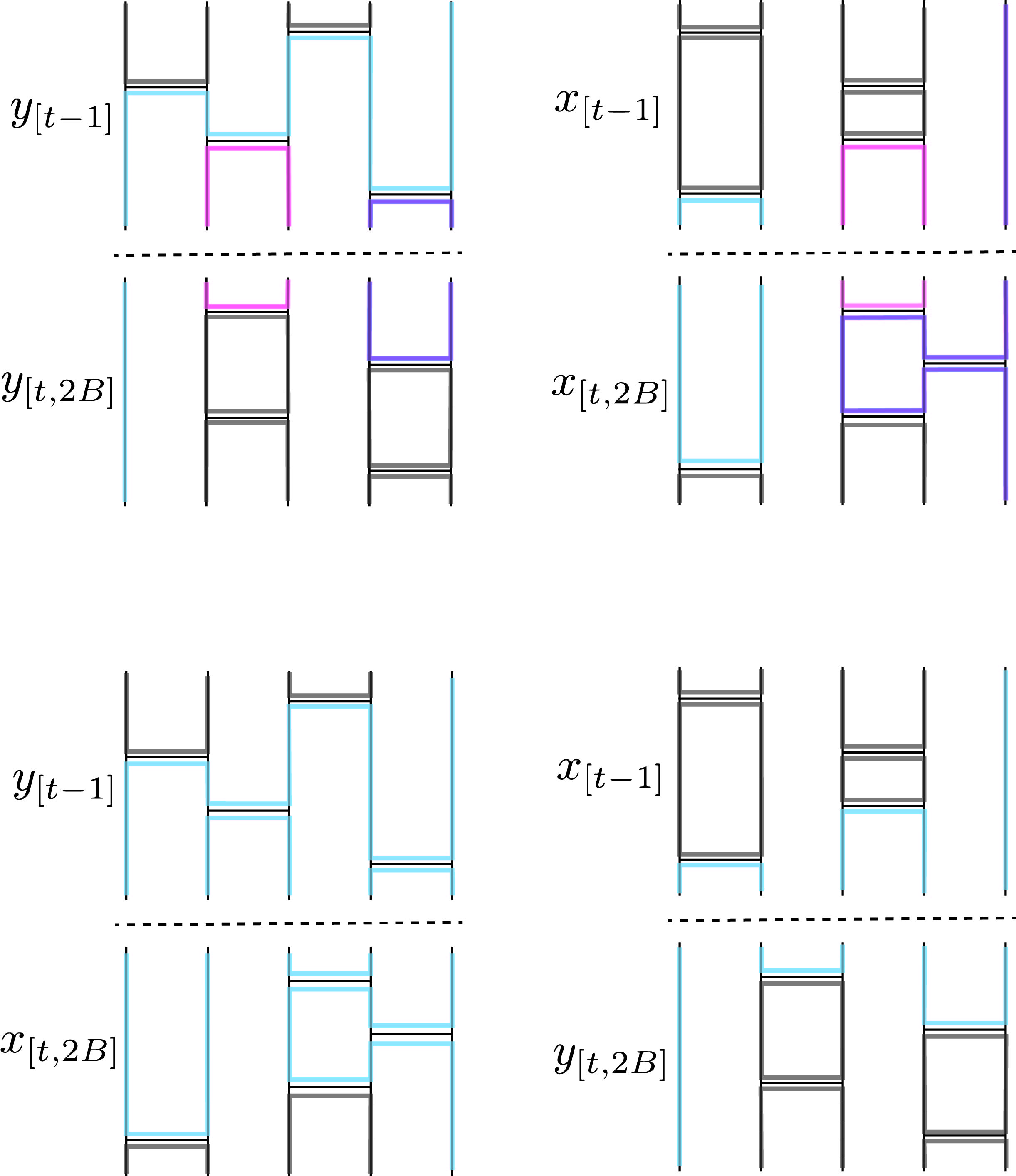}
    \caption{An example of the the configurations appearing in the inequality \eqref{eq:inequality-encoding-function-SSE}. Here the external loops are colored grey while any loops that pass through the cut between operator $t-1$ and $t$ are given unique colors in each configuration. Notice that each of the grey loops appears on both the top pair and the bottom pair of configurations, meaning that their contributions will cancel out in \eqref{eq:inequality-encoding-function-SSE}.}
    \label{fig:encoding-inequality-example}
\end{figure}

\subsection{Extension to cases with Ferromagnetic interactions, Local Fields, and nonuniform coefficients}\label{sec:Ferro-and-Local}

\paragraph{Ferromagnetic Interactions.}The extension to the case with included ferromagnetic terms as in \eqref{eq:FMterms1} is straightforward. Recall that the ferromagnetic terms come from the modified edge set $\mathcal E_F$ which only contains edges from $\mathcal A$ to $\mathcal A$ and $\mathcal B$ and $\mathcal B$ sites respectively. This means that a modified form of  \Cref{fct:LoopTermReq1} holds. More precisely we have: 
\begin{fact}\label{fct:FerroLoopTermReq1}
    Let $\mathcal G = (\mathcal V, \mathcal E)$ be a bipartite graph with the two sides of the bi-partition denoted by $\mathcal A\subset \mathcal V$ and $\mathcal B\subset \mathcal V$, and let $E_F\subset \{(u,v):u,v\in \mathcal A \text{ or } u,v\in \mathcal B\}$, where $\mathcal E$ represents the anti-ferromagnetic interactions and $\mathcal E_F$ represents the ferromagnetic interactions. Then in the loop representation of the model, loops that begin at the top boundary in one of the sublattices of the bi-partition must either terminate at the top boundary in the other sublattice, or at the bottom boundary of the same sublattice.  
\end{fact}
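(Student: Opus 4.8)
The plan is to extend the proof of \Cref{fct:LoopTermReq1} by checking that its underlying invariant survives the only genuinely new feature here --- operator bridges lying entirely inside one sublattice. Recall that an AFM bridge on an edge of $\mathcal E$ appears in the loop picture (\Cref{fig:antiFerroLoopExplanation}) as a ``U-turn'': a directed loop traversing it simultaneously moves to the other endpoint of the edge, hence from $\mathcal A$ to $\mathcal B$ or vice versa, \emph{and} reverses its direction of travel along the imaginary-time axis. The point I would stress is that an FM bridge on an edge $(u,v)\in\mathcal E_F$ instead appears, per the loop-representation rule of \Cref{fig:FerroTerms}, as a ``swap''/crossing: a directed loop traversing it merely relocates to the other site $u$ or $v$ --- which lies on the \emph{same} side of the bipartition by definition of $\mathcal E_F$ --- \emph{without} reversing its imaginary-time direction. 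A plain vertical stretch of a site-line between bridges changes neither the sublattice nor the direction.

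I would then fix a directed loop with one endpoint at the top boundary in $\mathcal A$, traverse it starting from that endpoint, and maintain the claim: the loop heads toward the bottom boundary whenever it sits on an $\mathcal A$ site and toward the top boundary whenever it sits on a $\mathcal B$ site. This holds initially, since at a top-boundary endpoint in $\mathcal A$ the loop can only head into the bulk, i.e.\ toward the bottom; and it is preserved at every step by the trichotomy above --- a stretch changes nothing, an AFM bridge flips the sublattice and the direction at once, and an FM bridge flips neither. (Equivalently, the $\mathbb Z_2$ quantity ``sublattice $\oplus$ direction'' is conserved along the whole loop.) The claim then pins down the other endpoint: to leave the bulk at the top boundary the loop must be heading toward the top, which by the claim happens only on $\mathcal B$; and to leave at the bottom it must be heading toward the bottom, which happens only on $\mathcal A$. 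Since an open loop has exactly two endpoints, each on the top or the bottom boundary, a loop that begins at the top of $\mathcal A$ must end at the top of $\mathcal B$ or at the bottom of $\mathcal A$; the case of a loop beginning at the top of $\mathcal B$ is identical after exchanging $\mathcal A$ and $\mathcal B$. This is \Cref{fct:FerroLoopTermReq1}.

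I do not expect a real obstacle. The only ingredient that needs care is the structural input that an FM vertex acts as a crossing rather than a U-turn, so that it transports a loop between two same-sublattice sites without flipping the imaginary-time direction --- this is exactly what the loop-representation rule for ferromagnetic bonds in \Cref{fig:FerroTerms} encodes. Once this is granted, together with the defining property that $\mathcal E_F$ connects only vertices within $\mathcal A$ or within $\mathcal B$, the conserved quantity used for \Cref{fct:LoopTermReq1} is unchanged and the remainder is the same short case-check on the two boundary endpoints.
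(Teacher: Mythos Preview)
Your proposal is correct and follows essentially the same approach as the paper: the paper's argument is a one-sentence appeal to the reasoning of \Cref{fct:LoopTermReq1} together with the key observation that ferromagnetic bridges, by \Cref{fig:FerroTerms}, leave both the propagation direction and the sublattice unchanged, so the invariant used there still holds. Your version simply makes this invariant (``sublattice $\oplus$ direction'') and the resulting boundary case-check explicit.
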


This follows by similar reasoning to \Cref{fct:LoopTermReq1}, noting that the ferromagnetic terms as described in \Cref{fig:FerroTerms} do not reverse the direction of the directed loops, and since by definition they only appear between sites that are in the same sublattice, whenever one is encountered both the loop propagation direction and side of the bi-partition are unchanged.  

Since \Cref{fct:LoopTermReq2} and thus \Cref{lem:LoopTermReq3} and the final result are logical consequences of \Cref{fct:LoopTermReq1}, it follows that the addition of ferromagnetic terms on $O(1)$ and $O(\log(n))$ bipartite graphs does not affect the rapid mixing results.

\paragraph{Local Fields.}  When we consider the case with local $X$ fields as discussed in  \Cref{subsubsec:TF}, the notation for a configuration $x = (x_1,...,x_{2B})$ now implicitly includes various $x_j$ that correspond to applying an $X$ operator at some site.  We consider the same set of canonical paths $\{\gamma_{xy}\}$ as in the previous section, updating the operators (including local $X$ operators) in sequence from left to right.  

 For any configuration $z \in \Omega$, let $q(z)$ be the same sequence of operators as $z$ except that all the 1-local $X$ operators are removed.   Since our QMC method uses a fixed value of $B$, $q(z)$ is not in $\Omega$ in general because it contains fewer than $2 B$ operators, but $q(z)$ is still totally a valid configuration except for that point. 
 Our strategy for the analysis including local fields is based on applying  \Cref{lem:LoopTermReq3} to $q(z)$, and then considering the effect of restoring the 1-local $X$ operators to $q(z)$ one by one.   We emphasize that this does not alter the left-to-right canonical paths (i.e. we don't treat the 1-local $X$ operators differently along the path), and that this is only a tool for analyzing the weights.   In this process we will consider configurations with fewer than $2 B$ operators, although the QMC itself never visits such states.

\begin{fact} Let $z = (z_1,...,z_{b})$ be an arbitrary configuration and consider inserting a 1-local $X$ operator at position $j$ in the sequence (labeled as the operator $z_j$), to form a new sequence $z'$ of length $b+1$.    The relation between the total number of loops $L(z')$ in the new configuration and the number of loops $L(z)$ in the previous configuration falls into one of two cases:
\setlength{\leftmargini}{5em}
\begin{itemize}
    \item[Case 1:] If $z_j'$ acts on a site that is part of \\
    a closed loop with no $X$ operators \\
    in $z$, then $L(z') = L(z)$.  

    \item[Case 2:] Otherwise, $L(z') = L(z) + 1$.
\end{itemize}
\end{fact}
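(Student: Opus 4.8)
The plan is to recast the insertion of a $1$-local $X$ operator as a purely topological cutting operation on the loop diagram and then compare the number of connected components before and after. Recall from \cref{subsubsec:TF} and \cref{fig:XLoopExplanation} that a vertex operator at site $i$ (the on-site identity $I_i$ or the spin flip $X_i$; in the undirected loop picture these are not distinguished) is a point at which line strips \emph{terminate}: the strand running along site $i$ just below the operator ends there, and a separate strand running along site $i$ begins just above it. Consequently, picturing the configuration $z$ as a finite disjoint union of circles (the closed loops) and arcs (the open strings, whose endpoints lie among the boundary sites of $\langle\sigma_L|$ and $|\sigma_R\rangle$ and the existing vertex operators), inserting $z_j' = X_i$ at position $j$ has the single effect of \emph{cutting} the strand of $z$ that runs along site $i$ at that time slice: one point $p$ of that strand is deleted and replaced by two new strand-ends $p^-$ (below) and $p^+$ (above). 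No other component is affected, since $X_i$ acts only on site $i$.

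First I would observe that $p$ is always an \emph{interior} point of its strand in $z$, because in $z$ the world line of site $i$ continues both above and below the slice at which the new operator is placed; this remains true even when $z_{j-1}$ or $z_j$ is itself a vertex operator on site $i$, since the new operator is still inserted strictly between them. Then I would run the elementary case analysis on the component $\mathcal C$ of $z$ containing $p$. If $\mathcal C$ is a closed loop it is a circle, and a circle with one point removed (the two ends being capped by $p^-,p^+$) is a single arc, so the component count is unchanged and $L(z') = L(z)$; this is Case~1. If $\mathcal C$ is an open string it is an arc, and an arc with one interior point removed is two arcs, so the component count goes up by exactly one and $L(z') = L(z) + 1$; this is Case~2. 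Since deleting a single interior point from a disjoint union of arcs and circles changes the number of components by at most one and never decreases it, no other value of $L(z')$ is possible, which in particular rules out $L(z') = L(z) - 1$ and $L(z') \geq L(z) + 2$.

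It remains to match the dichotomy ``$\mathcal C$ a closed loop'' versus ``$\mathcal C$ an open string'' with the hypotheses stated in the fact. The component $\mathcal C$ is a closed loop exactly when it has no endpoints; the only possible endpoints in $z$ are the boundary sites and the vertex operators, and a loop reaching a vertex operator terminates there and is therefore an open string, so no closed loop can contain a vertex operator. Hence ``$\mathcal C$ is a closed loop'' is equivalent to ``site $i$ at slice $j$ lies on a closed loop carrying no $X$ operator,'' which is the hypothesis of Case~1, while the complementary situation --- $\mathcal C$ touching $\langle\sigma_L|$, $|\sigma_R\rangle$, or some $X$ operator of $z$ --- is Case~2. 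The only step requiring genuine care is the topological claim used above, namely that the cut always lands on an interior point of a strand of $z$ so that it can neither merge two components nor split one into three; once that is pinned down, the component count follows directly from the local loop rules of \cref{fig:XLoopExplanation,fig:GenericSPEconfig}.
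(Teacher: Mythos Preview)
Your argument is correct and follows essentially the same approach as the paper: treat the inserted vertex operator as a cut of the strand passing through that site, then observe that cutting a closed loop yields one open string (no change in the count) while cutting an open string yields two (count increases by one). Your write-up is simply more careful than the paper's one-paragraph explanation, making explicit why the cut lands at an interior point and why no other change in $L$ is possible.
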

In the first case, adding an $X$ operator to a closed loop that contains no other $X$ operators, results in turning that closed loop into an open loop whose endpoints coincide at the location of this new operator - this leaves the total number of loops unchanged.  In all other cases, inserting an $X$ operator divides a single loop (really an open string) into two loops, thereby increasing the number of loops by one.    
    
  Now we define $k(z)$ to be the number of 1-local $X$ operators in $z$, and $\ell(z)$ be the number of closed loops in $q(z)$ that contain at least one $X$ operator in $z$.   Now consider going from $q(z)$ to $z$ by restoring the 1-local $X$ operators one at a time, and applying Fact 4 at each step, we have the relation

\begin{equation}L(z) = L(q(z)) + k(z) - \ell(z) \label{eq:1localloops}\end{equation}
We now seek to upper bound the congestion for this system of canonical paths, by establishing an encoding inequality as before,
\begin{equation}\pi(x)\pi(y) \leq K \pi(z)\pi(\eta) \label{eq:1localcongestion}\end{equation}
where $z = y_{[t-1]}x_{[t,2B]}$ and $\eta = \eta_{(x,y)}(z)=x_{[t-1]}y_{[t,2B]}$.  By \Cref{lem:LoopTermReq3}, this inequality holds when all $X$ operators are deleted,
\begin{equation} \pi(q(x)\pi(q(y)) \leq C \pi(q(z))\pi(q(\eta)) \label{eq:1localbasecongestion}\end{equation}
for $C = 2^{4|A|-2}$.  Now \eqref{eq:1localcongestion} is equivalent to
$$2^{L(x) + L(y) - L(z)-L(\eta)} \leq K$$
Applying \eqref{eq:1localloops} and \eqref{eq:1localbasecongestion} this becomes
$$2^{[(k(x) + k(y)) - (k(z)+k(\eta))]-  \ell(x) - \ell(y) + \ell(z) + \ell(\eta)} \leq C K$$
The part of the exponent in square brackets is at most 1 by the design of the canonical paths (the number of $X$ operators in $(x,y)$ and $(z,\eta)$ differs by at most 1).  Lastly we need to compare $\ell(z) + \ell(\eta)$ (the number of closed loops in $q(z)$ and $q(\eta)$ on which $X$ terms act  in $z$ and $\eta$) to $\ell(x) + \ell(y)$.   Noting the signs in the exponent terms, the ``bad case'' occurs when $\ell(z) + \ell(\eta)$ is larger than $\ell(x) + \ell(y)$.  This bad case corresponds to many $X$ operators in $x$ and $y$ that fall under case 2 of fact 4, now instead falling under case 1 of fact 4 in $z$ and $\eta$.   In other words, the 1-local $X$ terms can only reduce $L(z) + L(\eta)$ in comparison with $L(x) + L(y)$ by at most the number of new closed loops in $z,\eta$ that do not appear in $x,y$.   

By \Cref{lem:LoopTermReq3}, the number of such loops can be bounded in terms of the size of the $O(1)$ sublattice.   Since $\ell(x) + \ell(y)$ and $\ell(z) + \ell(\eta) $ count subsets of loops from the corresponding configurations $q(x),q(y)$ and $q(z),q(\eta)$, these loops can be divided just as before into external loops (which do not intersect with the cut at time $t$ and internal loops which do intersect with the cut.   Consider an example of an external loop that occurs before the cut at time $t$ in the configuration $q(x)$.  In $x$ this external loop will in general be cut by some local $X$ operators, but these are also external in the sense that by time $t$, the loop from $q(x)$ appears in $q(z)$, and the $X$ operators acting on that loop in $x$ also appear in $z$, increasing the number of loops in each configuration by the same amount so that these contributions cancel. 

Therefore the difference in the number of loops created by $X$ operators can only come from closed internal loops across the cut in $q(z),q(\eta)$ that do not appear in $q(x),q(y)$, and so by \Cref{lem:LoopTermReq3} we have
$$(\ell(z) + \ell(\eta)) - (\ell(x) + \ell(y)) \leq 2|A| $$
and therefore we can satisfy the congestion bound \eqref{eq:1localcongestion} with $K = C^2$.  

\paragraph{Non-uniform Coefficeints.}  The generalization of each case to include non-uniform couplings on the local Hamiltonian terms is straightforward due to the simple product form of the weights in \eqref{eq:loopzwithweights},
$$
\pi(z) = \frac{2^{L(x)}}{Z} \prod_{k=1}^{2B}T(x_k).
$$
Recall that $T(x_k)$ is the coupling strength $w_{ij}$ ($J_{ij}$) if $x_k$ is a 2-local AFM (FM) term acting on qubits $i,j$, or $T(x_k) = g_i$ if $x_k$ is a 1-local $X$ operator acting on qubit $i$.   The main observation is that this product of couplings cancels on both sides of the encoding inequality for our canonical paths,
$$\pi(x) \pi(y) \leq C \pi(z) \pi(\eta)$$
Every term $T(x_k), T_(y_k)$ on the left side cancels exactly with a term $T(z_k),T(\eta_k)$ on the right side.   The inclusion of non-uniform Hamiltonian couplings has no effect on the encoding inequality, but it does have a small effect on the transition probabilities as we describe in the next section.

\subsection{Upper bound on the congestion and mixing time.}\label{sec:full-upper-bound}
We now return to the full bound on the congestion \eqref{eq:congestionwithfactors}, 
\begin{equation}
    \Phi(\Gamma) \leq \max_{(z,w)\in E} \frac{KCl}{P(z,w)} \leq \frac{K C l}{P_{\min}}
\end{equation}
where $l \equiv \max |\gamma_{xy}|$ is the length of the longest path, and $P_{\min}$ is the smallest non-zero transition probability along any edge.  Our canonical paths have length $l \leq 2 B$.  The Hamiltonian contains at most $3 N$ local terms, so the probability of proposing to update a particular operator at a particular location $1,...,2 B$ is $1/((2 B)(3 N))$.   The update of this operator will reduce the number of loops by at most 1, thereby reducing the weight by 1/2, and if the couplings are non-uniform the Metropolis rule will accept with probability proportional to $T_{\min} / T_{\max}$, and so the smallest transition probability is
$$P_{\min} = \frac{1}{6 B N} \left(\frac{T_{\min}}{T_{\max}} \right)$$
which indicates that the ratio of any two non-zero couplings in the Hamiltonian should be at least $1/\textrm{poly}(N)$.   Inserting these factors into \eqref{eq:congestionwithfactors}, 
$$\Phi(\Gamma) \leq 12 B^2 N \left(\frac{T_{\max}}{T_{\min}} \right) 2^{8 |\mathcal{A}| - 4}$$
This bounds the relaxation time $t_{\textrm{rel}}$ by \eqref{eq:relaxcongestion}.   To bound the mixing time by \eqref{eq:relaxmixing} we also need the bound $\pi_{\min} \geq (2 T_{\min})^{-2  B} $ (which follows from the maximum number of loops in any configuration being bounded by the number of operators $2B$), so we have

\begin{equation}
t_{\textrm{mix}} \leq 24 B^3 N \left(\frac{T_{\max}}{T_{\min}} \right) 2^{8 |\mathcal{A}| - 4} \log(8 T_{\min}),
\end{equation}
which gives us rapid mixing for $|\mathcal A| = O (1)$ under the mild condition that $T_{\max}/T_{\min}$ is at most polynomially large. 
Since the smallest coefficient of the Hamiltonian is essentially what determines the energy scale of the Hamiltonian, this condition can be reread as the largest coefficient being only polynomially large after rescaling. 
This is a reasonable condition both for physical and complexity-theoretic considerations.

\subsection{Topological Viewpoint}

Stepping back to the big picture, our proof relies on topological features of the loop representation that hold on $O(1)$-bipartite graphs.  This perspective is summarized in \Cref{fig:topology}, where the topological structure of the loops is emphasized in (c). 
The crucial point in our proof for \Cref{lem:loopdiffbound} is to notice that only the loops touching the central ``cut'' in the canonical path plays a nontrivial role in the analysis. 
This is shown in (d) with the center cut as the dotted line, and coloring the irrelevant loops (not touching the cut) in gray. 
This gives another way to pictorially see that the number of proper loops touching the cut cannot exceed $|\mathcal A|$ thus giving the bound in \Cref{lem:loopdiffbound}. 

As shown in \Cref{fig:topology} (e), even with $X$-fields, the essence of the proof remains the same. The crucial point is still the fact that you can have at most $|\mathcal A|$ proper loops, and this does not change even when we have $X$-operators, because all they can do is to cut the loops. 
The bounding argument holds independently of whether the central loops are terminated at the true boundaries $\langle \sigma_L|$ and $|\sigma_R\rangle$ or at an $X$-operator. 

It is interesting to see that the analysis of this simple canonical path naturally leads to topological considerations as the above. The simplicity of star-like $O(1)$-bipartite graphs comes from the fact that there {\it cannot} be any highly nontrivial topologies in these systems. 
The ground state of the Heisenberg model on a star graph is exactly mappable to a thermal state of a 1-dimensional classical Potts model as we show in \cref{sec:pottsduality}, and thus can in principle be analyzed in a totally classical way. We can say that the lack of nontrivial topologies is the underlying factor that enables such mappings to a rather simple classical model. 

\begin{figure}[t]
\centering
\includegraphics[width=8cm]{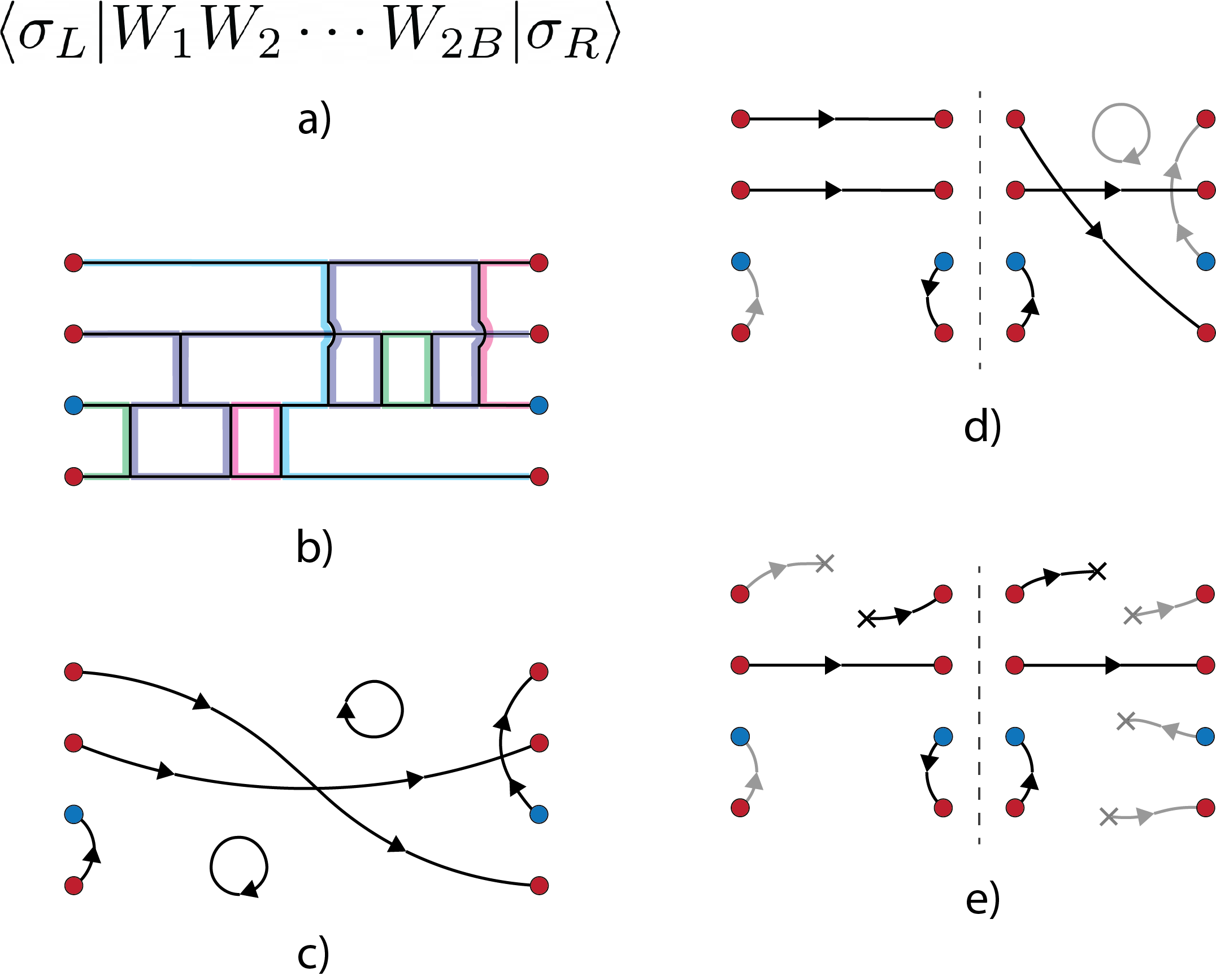}
\caption{Five ways to view the sampled configurations. 
(a) The very basic operator sequence as written in \cref{eq:ConsistentSeq}. 
(b) The diagramatic depiction of the loop representation as in \cref{eq:SSE-loop-state-space}. The bipartiteness of the graph is represented with blue and red vertices. 
(c) The Topological way of viewing the loop representations. The arrows are guide to the eyes with the rule that red goes from left to right, and opposite for blue. 
(d) The way such a topologically depicted loop configuration will look like under the canonical path we consider, cut in the middle. 
(e) The same thing as (d) but for the case with 1-local $X$-terms. For (d) and (e) we show the loops and line segments not touching the cut in gray, indicating their irrelevance to \cref{eq:inequality-encoding-function-SSE} since they easily cancel out.}\label{fig:topology}
\end{figure}

\section{Discussion and Open Problems}\label{sec:discussion}
We have established that the ground state energy of $O(1)$-bipartite antiferromagnetic Heisenberg ground states can be computed using a quantum Monte Carlo method in $\textrm{poly}(N)$ time. 
Our result extends to systems with nonuniform coefficients, staggered local $X$-fields, and ferromagnetic interactions within each bipartite sublattice, as shown in the previous section. 
Interestingly, the latter two only serve to enhance the antiferromagnetic order. 
In other words, both of the extensions only lead to larger values of 
$\langle O_{\textrm{N\'eel}}^2\rangle$ where 
\begin{equation}\label{eq:Neel} O_{\textrm{N\'eel}} = \frac{1}{N} \left(\sum_{i \in \mathcal{A}} X_i - \sum_{i \in \mathcal{B}} X_i \right)\end{equation}
is called the N\'eel order parameter. 

In the following, we discuss the outlook for extending our results to a broader class of models and examine the limitations of our current proof technique. 
These challenges seem to highlight an intriguing connection between phases of matter, computational complexity, and hardness of proofs.  
In \Cref{sec:phaseappendix} and \Cref{sec:vbs} we further discuss  the conjectured relationships between QMC mixing times and the quantum phases considered for Heisenberg models on more general bipartite graphs, the N\'eel phase and the valence-bond solid (VBS) phase.   %This hints us to the 

%\paragraph{On the Edge of Complexity.} 
\paragraph{Extensions and Conservation of Difficulty.}
As we have mentioned in the introduction, the general Hamiltonian \cref{eq:GeneralHam} we have shown rapid mixing of QMC for also exactly coincides with the class of Hamiltonians where exact diagonalization only takes polynomial time thanks to the Lieb-Mattis theorem (as explained in \cref{app:LiebMattisTheorem} ), despite not using this property in our proof. In any case, our motivation is to initiate the analysis of QMC for general bipartite Heisenberg AFM, and so here we describe three natural extensions that would evade known cases of polynomial-time exact diagonalization by the Lieb-Mattis theorem and other methods, motivating them as potential next steps for QMC mixing time analysis.

The first case is when we have $ O(N)$ vertices on both sides of the bipartite graph. In this case, not only does our proof fail, but we can also explicitly construct examples where our system of canonical paths goes through configurations $z$ and $\eta$ that have combined weight exponentially smaller compared to $x$ and $y$. 
For example, in \Cref{fig:badconfigs} (a-1), we show a case where the interaction graphs is a cycle, and two particular SPE configurations $x$ and $y$. 
If we construct the canonical paths in the same way as in \cref{sec:Canonical-Paths}, then in the intermediate step shown in \Cref{fig:badconfigs} (a-2), we have configurations $z$ and $\eta$ that both have $\sim N/2 =O(N)$ smaller number of loops compared to $x$ and $y$. 

\begin{figure}[h!]
\centering
\includegraphics[width=7.8cm]{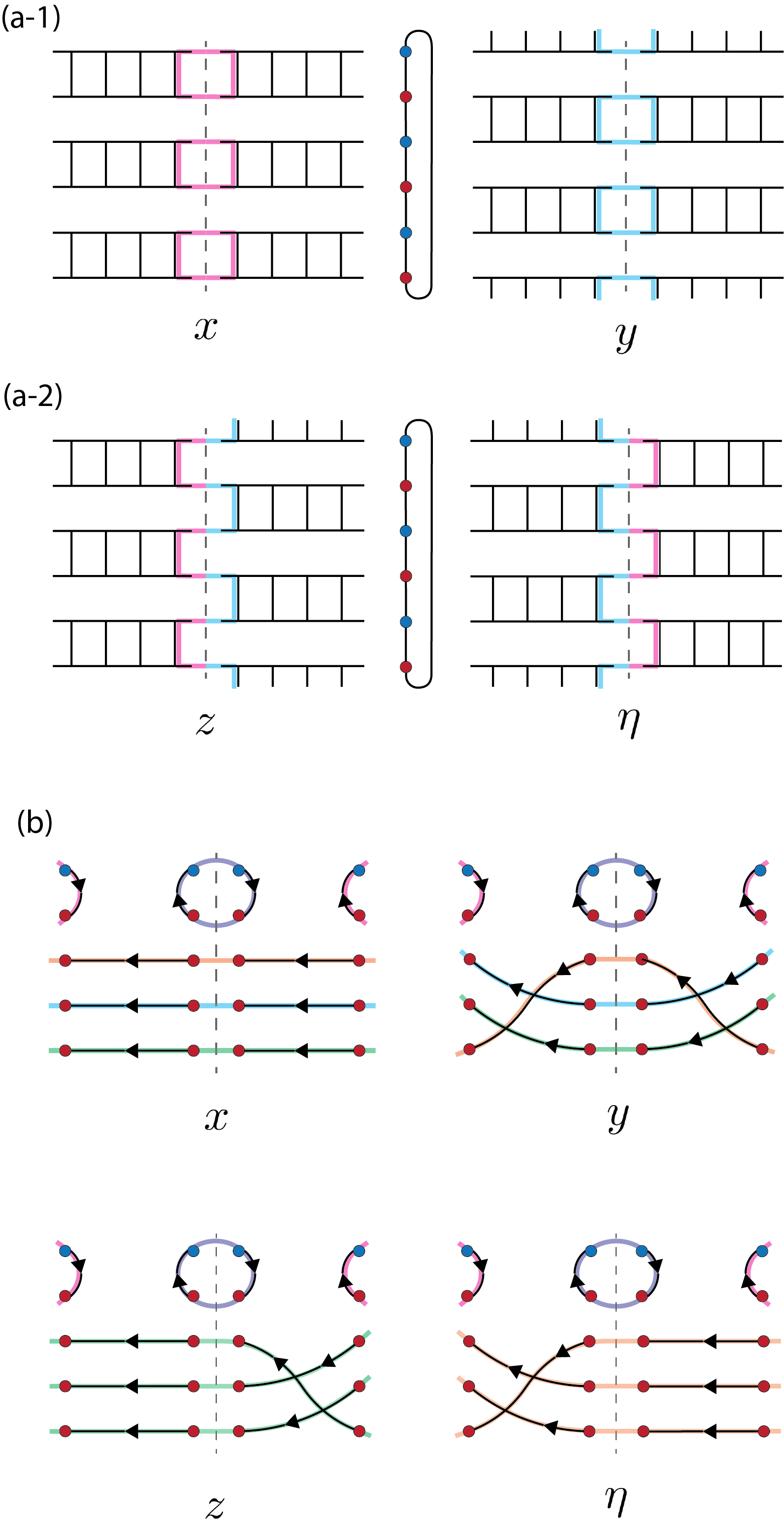}
\caption{Configurations that lead to the failure of our ``left-to-right'' canonical paths in some attempted extensions. (a-1) Two configurations $x$ and $y$ that do not work for the cycle graph where $|\mathcal A| = |\mathcal B|=O(N)$. There are $O(N)$ loops that touch the center cut. (a-2) In the middle of this path, there will be configurations $z$ and $\eta \equiv \eta_{e}(\gamma_{xy})$ that only have a single loop touching the center cut, giving $O(N)$ smaller loop numbers, thus leading to exponentially smaller weights $\pi(z)\pi(\eta) \ll \pi(x)\pi(y)$.
(b) Failure example for the star graph when we have nonzero temperature i.e., a periodic boundary condition. The configurations are represented in the topological manner as in \cref{fig:topology} for visual clarity, and actual configurations with operator/bridges that realize this topology is also not hard to construct. 
}\label{fig:badconfigs}
\end{figure}

It should be noted that the failure of the particular canonical paths we considered here (and in the subsequent examples) does not imply that the SPE Markov chain is slow mixing for these examples, only that a substantially different analysis would be needed.   The existence of a low congestion system of canonical paths is only a sufficient condition for rapid mixing, but if the chain is rapidly mixing then there necessarily exists a low congestion multicommodity flow (which generalizes the canonicaal paths method to take multiple paths between each pair of configurations) \cite{sinclair1992improved}, which motivates our examination of various ``bad paths'' that could occur in these extensions.

The next extension would be to simulate thermal states of $O(1)$-bipartite graphs above zero temperature using the standard SSE QMC, where the significant difference for the mixing time analysis is the presence of periodic boundary conditions in the imaginary time direction. 
This becomes essential for thermal simulations because of the trace in the definition of $\mathcal Z$. 
In this case with periodic boundary conditions, loops that touch the boundary then wrap around it (potentially multiple times) until they eventually close.  Since \Cref{lem:LoopTermReq3} no longer applies without an open boundary, substantial modification to the canonical paths will become necessary. 
Once again we can explicitly construct cases where the straightforward canonical path fails, as shown in \Cref{fig:badconfigs} (b). 
This case is perhaps surprising since increasing the temperature often  makes the sampling problem easier \cite{crosson2020classical, crosson2021rapid,lubetzky2012critical,levin2017markov}. 
Therefore, we could reasonably expect that the Markov chain is rapidly mixing even for closed boundary conditions, and proving this even for the star graph ($|\mathcal A|=1$ case) at non-zero temperature would be a significant extension.

The third potential direction for generalizing our results would be to simulate the ground state with the simultaneous presence of 1-local fields in both the $X$ and $Z$ directions. 
If we further assume non-uniform weights for these terms, no basis rotation can allow the 1-local fields to align in a single direction.  
The handling of 1-local $Z$ terms within the SSE / SPE method is not described in \Cref{sec:introduceQMC}; one possible way is to have local $Z$ operators together with the identity so that when they are present on a loop, the weight of the loop reduces from $2$ to $1$. This rapidly leads to counterexamples for the simple canonical paths we employ (in fact the encoding inequality becomes exponentially bad when our canonical paths are applied to a uniform field in the $Z$ direction).

Remarking again on the coincidence between the regime of our analysis and the applicability of exact diagonalization, this may point to an unknown mathematical connection between QMC dynamics and the dimensionality or entanglement in the system, even when these properties are not exploited directly.   It may also evoke the ``{\it law of conservation of difficulty}'' \cite{tasakiblog,taoblog2,emails,mulmuley2011p}, an empirical folklore that states the difficulty of obtaining a proof tends to be conserved regardless of the method.   However,  while the exact diagonalization based approach will inevitably have inefficient scaling for general bipartite {\sf QuantumMaxCut} even for the simple square lattice, we have strong empirical evidence of rapid mixing of quantum Monte Carlo for such cases with bipartite lattices \cite{harada2013possibility,takahashi2024so5}. 
If one aims to answer the 4th open problem of \cite{gharibian7faces}: 
\begin{quote}
\hspace{-8mm}
\begin{minipage}{1.2\linewidth}
\centering  {\it What is the complexity of {\sf QuantumMaxCut} on bipartite graphs?}
\end{minipage}
\end{quote}
in the positive direction (containment in {\sf BPP}), 
then quantum Monte Carlo sampling is the only current method that appears to be demonstratively efficient in practice. 
One future direction for proving stronger rapid mixing results would be to use more detailed information about the ground state.  
In \Cref{sec:phaseappendix}, we discuss this possibility in terms of assuming that the ground state is in the N\'eel phase, where $\langle O_{\textrm{N\'eel}}^2\rangle$ converges to a nonzero constant as we increase the system size in a systematic way.

\paragraph{Acknowledgements}
J.T. thanks Chaithanya Rayudu, Naoto Shiraishi, Manaka Okuyama, Yoshihiko Nishikawa, Kai Nakaishi, Tota Nakamura, Naoki Kawashima, Hosho Katsura, Hal Tasaki, and Anders Sandvik for helpful discussions. S.S. also thanks Chaithanya Rayudu for insightful discussions.

J.T. acknowledges support from the U.S. National Science Foundation under Grant No. 2116246, the U.S. Department of Energy, Office of Science, National Quantum Information Science Research Centers, and Quantum Systems Accelerator. 

This material is based upon work supported by the U.S. Department of Energy, Office of Science, National Quantum Information Science Research Centers, Quantum Science Center (QSC). S.S. was funded by the QSC to perform the analytical calculations and to write the manuscript along with the other authors
and was supported by the Laboratory Directed Research and Development program of Los Alamos National Laboratory (LANL) under project number 20230049DR and also acknowledges support from US Department of Energy, Office of Science, Office of Advanced Scientific Computing
Research, Accelerated Research in Quantum Computing program.

\bibliographystyle{plainnat}
\bibliography{mybibliography}

\onecolumn\newpage
\appendix

\section{Lieb-Mattis theorem and symmetry-reduced exact diagonalization}\label{app:LiebMattisTheorem}
The AFM Hamiltonian \eqref{eq:HAFM} commutes with $Z_{\mathrm{tot}} \coloneq \sum_{i =1}^N Z_i$, and so $H_\textrm{AFM}$ has a basis of eigenstates that are each also eigenstates of $Z_{\mathrm{tot}}$.  Furthermore, the off-diagonal matrix elements $\langle \sigma |H_\textrm{AFM}|\sigma^{\prime}\rangle$ in the $Z$ basis are non-zero only if $\sigma,\sigma^{\prime}$ are eigenstates of $Z_{\mathrm{tot}}$ with the same eigenvalue. 
We can define the total spin $S_{\mathrm{tot}}$ via $\hat{S}^2\coloneq X_{\mathrm{tot}}+Y_{\mathrm{tot}}+Z_{\mathrm{tot}}\eqqcolon 4S_{\mathrm{tot}}(S_{\mathrm{tot}}+1)$ since $\hat{S}^2$  also commutes with the Hamiltonian, and has a definite value. 
Lieb and Mattis combined this observation with the Perron Frobenius theorem and variational reasoning to deduce the following.

\paragraph{Theorem (Lieb-Mattis)~\cite{lieb1962ordering,tasaki2020physics}}  For any spin-1/2 Hamiltonian $H_\textrm{AFM}$ consisting of Heisenberg antiferromagnetic terms on a bipartite lattice with bipartite subgraphs $\mathcal{A}, \mathcal{B}$, the ground states of $H_\textrm{AFM}$ have a total spin of $S_{\textrm{tot}}=\Bigl||\mathcal{A}| - |\mathcal{B}|\Bigr|/2$, with the degeneracy  $\Bigl||\mathcal{A}| - |\mathcal{B}|\Bigr|+1$. 
 The degenerate ground states are each spanned by $Z$-basis states with Hamming weights $K=\min\{|\mathcal A|, |\mathcal B|\}, \min\{|\mathcal A|, |\mathcal B|\}+1, \ldots, \max\{|\mathcal A|, |\mathcal B|\}$ respectively. 
\vspace{10pt}

In the case of a size $N$ star-like bipartite graph with a sublattice of size $M < N/2$, this implies that 
it is sufficient to consider only $Z$-basis states with 
Hamming weight equal to $M$, and so the dimension of the subspace is $\sum_{k \leq M} {N \choose k} = O(N^M)$.  This implies that the ground state energy of $H_\textrm{AFM}$ can be found in polynomial time by exact diagonalization if $M=O(1)$.

This solution by reduced dimensionality and exact diagonalization also applies with the inclusion of ferromagnetic terms on either bipartite sublattice (these terms also commute with $S_z$, and the Lieb-Mattis theorem has a direct extension for such terms~\cite{tasaki2020physics}).   When we include 1-local $X$ fields, the Hamiltonian matrix in the $Z$ basis is an irreducible matrix, and so it has a unique ground state with support on all $Z$ basis strings (therefore the reduced dimensionality no longer holds in the $Z$ basis).   However, due to the $\mathrm{SU}(2)$ symmetry of $H_\textrm{AFM}$, we can apply a Hadamard rotation to go from $H_\textrm{AFM} + H_X$ to $H_\textrm{AFM} + H_Z$ and once again find the ground energy by exact diagonalization with a reduced dimension.  

Therefore the ground state energy of all the models we consider can be found in polynomial time using the Lieb-Mattis theorem and exact diagonalization.   The one task that our QMC algorithm can do that cannot obviously be replicated by these methods is to approximately sample from the ground state of $H_\textrm{AFM} + H_X$ in $Z$ the basis.

\section{Scaling of $B$ for Ground State Energy Estimation}\label{sec:scalingOfB}
In this section we establish the scaling of $B$ for the open-boundary condition SSE method based on \eqref{eq:expectation}.   We will assume the Hamiltonian has all negative eigenvalues $0 > E_{2^N-1} \geq \ldots \geq E_1 \geq E_0$ as in \eqref{eq:HeisenbergHGeneral}, so that $A = -H$ is a positive semidefinite operator with eigenvalues $\lambda_0 > \lambda_1 \geq ... \geq \lambda_{2^N-1} > 0$, with $\lambda_i = -E_i$.  Let $\epsilon > 0$ and define a projector onto states within $[E_0,E_0 + \epsilon]$, and a projector onto all other energy eigenstates:

$$ \Pi_{[E_0, E_0 + \epsilon]} = \sum_{E_i \in [E_0,E_0+\epsilon]} |E_i \rangle \langle E_i |  \quad , \quad \Pi_{\textrm{other}} = \sum_{E_i > E_0+\epsilon} |E_i \rangle \langle E_i |$$
Define the normalized state
$$
|M_B\rangle = \frac{(-H)^B|+^N\rangle}{\sqrt{ \langle +^N |(-H)^{2B}|+^N \rangle}}= \frac{A^B|+^N\rangle}{\sqrt{\langle +^N |A^{2B}|+^N \rangle}}
$$
We will first derive the scaling of $B$ necessary to achieve $\langle M_B | \Pi_\textrm{other} | M_B\rangle < \delta$, for any $\delta > 0$.  Analyzing the numerator,
$$\langle +^N | A^{B}\Pi_\textrm{other}A^B| +^N\rangle \leq (2^N -1)(\lambda_0 - \epsilon)^{2B} $$
Now examining the denominator, 
$$\langle +^N | A^{2B} | +^N\rangle \geq |\langle \psi_0 | +^N\rangle|^2 \lambda_0^{2B}$$
Defining $w = |\langle \psi_0 | +^N\rangle|^2$, we have
$$\langle M_B | \Pi_\textrm{other} | M_B\rangle \leq (2^N - 1) \left(1 - \epsilon/\lambda_0 \right)^{2B}w^{-1} \leq 2^N w^{-1} e^{-2 B \epsilon/\lambda_0}\leq e^{N - 2 B \epsilon /\lambda_0 - \ln w}$$
setting the final exponent $N - 2 B \epsilon /\lambda_0$ equal to $\ln \delta$ and solving for $B$ yields,
$$B = \frac{|E_0|}{2 \epsilon} (N + \ln (\delta w)^{-1})$$
Recall that all the eigenvalues of $H$ are negative, so $|E_0| = \|A\| = \|H\|$, and so the bound can be equivalently expressed in terms of the operator norm of $H$.  

Since $\Pi_{[E_0,E_0 + \epsilon]} + \Pi_\textrm{other} = \mathbbm{1}$, taking $B$ as above will suffice to achieve $\langle M_B | \Pi_[E_0,E_0 + \epsilon] | M_B \rangle \geq 1 - \delta$.  Therefore to estimate the energy to within precision $\epsilon$ using the local Hamiltonian terms, we can take $\delta = \epsilon/\|H\|$.   

As for the overlap $w = |\langle \psi_0 | +^N\rangle|^2$, the Hamiltonian is stoquastic and so $\psi_0$ is a normalized state with all non-negative amplitudes in the computational basis.   Therefore $w \geq 2^{-N}$ always, and so we have the following theorem.  

\paragraph{Theorem.} For any Heisenberg antiferromagnet in the form of \eqref{eq:HeisenbergHGeneral}, the expectation of $H$ with respect to the state
$$|M_B\rangle = \frac{(-H)^B|+^N\rangle}{\sqrt{ \langle +^N |(-H)^{2B}|+^N \rangle}}$$
satisfies $|\langle M_B | H | M_B \rangle - E_0| \leq \epsilon$ whenever
$$B \geq \frac{\|H\|}{2 \epsilon} \left ( 2 N + \ln \left (2 \|H\| \epsilon^{-1}  \right) \right). $$

\section{Observable Estimation}\label{sec:ObsAndEnergyEst}

Let $O$ be any local observable that can be decomposed into a sum of the two dual operators of the loop representation (e.g. ${O} = (\openone + X) $ or ${O} = 2 h_{ij} = I_{ij} +S_{ij}$). Define the operator generally as ${O} = W_a+W_b$.  We can estimate the expectation $\langle {O} \rangle $ using samples from the distribution
$$ \pi(x) = \frac{2^{L(x)}}{\mathcal Z} \quad , \quad \mathcal Z = \sum_{x\in \Omega} 2^{L(x)}$$
as follows.  Note that by definition we have:
\begin{align}\label{eq:SSElocalObs}
 \langle {O} \rangle \cong \frac{\sum_{\sigma_L,\sigma_R,W\in \mathfrak S}\langle \sigma_L |\prod_{t=B+1}^{2B} W(t) {O} \prod_{t=1}^{B} W(t) |\sigma_R\rangle }{\sum_{\sigma_L,\sigma_R,W\in\mathfrak S}\langle \sigma_L | \prod_{t=1}^{2B} W(t) |\sigma_R\rangle }
\end{align}
We can then expand the numerator and define it as $\mathcal Z_{ O}$:
\begin{align}
   \mathcal Z_{ O} \equiv \sum_{\sigma_L,\sigma_R,W\in \tilde{\mathfrak S}}\langle \sigma_L |\prod_{t=1}^{2B+1}W(t)  |\sigma_R\rangle 
\end{align}
where $\tilde{\mathfrak S}$ are is the set of all length $2B+1$ operator strings such that that $W(B+1)\in \{W_a,W_b\}$. This can then be incorporated into the un-directed loop representation as a sum over configurations $x_{ O}=(e_1, e_2, \cdots,e_{B},e_o,e_{B+2},\cdots,  e_{2B+1})$ where $e_o$ is always the type of operator corresponding to $ O$. We can then use the fact that a configuration $x = (e_1, \ldots,  e_{2B})$ uniquely defines a configuration $x_{ O}$ by simply inserting $e_o$ in the middle to write: 
\begin{align}
     \frac{\mathcal Z_{ O}}{\mathcal Z} = \sum_{x\in \Omega}\frac{2^{L(x_{ O})}}{\mathcal Z} = \sum_{x\in \Omega}\frac{2^{L(x)}}{\mathcal Z}2^{(L(x_{ O})-L(x))} = \sum_{x\in \Omega} \pi(x)w_{ O}(x) 
\end{align}
where $w_{ O}(x) \equiv 2^{(L(x_{ O})-L(x))}$ is the observable that returns two to the power of the difference of the number of loops that are in $x_{ O}$ and $x$. This means that using a set of samples $x_1,...,x_m$ from $\pi$, we can estimate $\langle {O}\rangle$ by finding the average  $m^{-1}\sum_{i=1}^m w_{ O}(x_i) $.

\section{Equivalence between the star graph AFM ground state and the 1D classical Potts model}\label{sec:pottsduality}

Here we describe an equivalence that holds for the special case of a star graph AFM with $N$ qubits, which allows us to map the loop representation QMC for the ground to the finite temperature partition function of a 1D classical Potts model~\cite{wu1982potts} with $N-1$ colors.   This equivalence is special to the case of the star graph and for more general ${O}(1)$-bipartite graphs it does not appear to easily extend to a finite-range classical model.

The key insight behind the equivalence is the topological observations in \Cref{lem:loopdiffbound} that sharply constrain the kinds of closed internal loops that appear in the SPE representation of the star graph ground state.  For an SPE configuration consisting of the string of operators $(x_1,..,x_{2B})$, where each $x_i$ is an AFM local term, closed internal loops appear when and only when there exists indices $i$ such that $x_i = x_{i+1}$.   Therefore the number of loops in each configuration is:
$$L(x_1,...,x_{2B}) = N + \sum_{i=1}^{2B}  \delta_{x_i,x_{i+1}}$$
where $N$ counts the number of open strings with both endpoints on a boundary, and the sum over Kronecker delta terms accounts for all of the internal loops.   Therefore the weight of each configuration is
$$2^{\textrm{ \# of loops}} = 2^N  e^{- \ln(2) E_\textrm{Potts}(x_1,...,x_{2B})}$$
were $E_\textrm{Potts}(x_1,...,x_{2B}) = -\sum_{i=1}^{2B}  \delta_{x_i,x_{i+1}}$ is the usual energy of the (ferromagnetic) Potts model.    The equivalence consists of this mapping between the SPE weight of each configuration and the Boltzmann weight of a configuration in the Potts model.   The Potts model has $2B$ sites and $N-1$ colors.

\section{Phases of matter and complexity.}\label{sec:phaseappendix}
When the graph is bipartite, the different types of ground states that an AFM Heisenberg-type model (with weights) can host are believed to be relatively limited. Here, we will focus on two major types of ``phases''. The term ``phase'' in condensed matter physics is an equivalence class over quantum states that is defined over families of increasingly large Hamiltonians with the defining property \cite{tasaki2020physics,sachdev1999quantum}. 

As explained in \Cref{sec:discussion}, the most common phase to be realized is the N\'eel phase, defined by the property that $\langle O_{\textrm{N\'eel}}^2\rangle$ converges to a nonzero constant in the $N\to\infty$ limit. 
If one reflects on how we got the loop representation in \cref{sec:looprep}, any two qubits will be completely independent if no loops contains them both simultaneously. The presence of the N\'eel order is thus equivalent to the loops in SPE being typically extensively large, at least around the center of the configuration. 
High dimensional bipartite lattices tend to realize this N\'eel order; it is proven rigorously for cubic lattices in 3-dimensions or higher, and is strongly believed with overwhelming numerical evidence in 2-dimensions \cite{dyson1978phase,kennedy1988existence}.

One way to view the N\'eel phase is that the ideal N\'eel product state $\prod^{\otimes}_{\mathcal A}|+\rangle\prod^{\otimes}_{\mathcal B}|-\rangle$ is a good approximation. 
The other contrasting phase that is commonly considered is the valence-bond-solid (VBS) phase, where the ground state can again be approximated by a product state, but in this case with singlets: $\prod^{\otimes}_{\mathcal C}(|\updown\rangle - |\downup\rangle)/\sqrt{2}$ where $\mathcal C$ is some dimer covering of the graph.
This type of state is another reasonable candidate for ground states of Heisenberg-type models, since the Heisenberg interaction is favoring singlets. 
In the loop representation, this state corresponds to having many small loops on top of the covering $\mathcal C$. 
While no pure unweighted bipartite AFM Heisenberg Hamiltonian is known to host this state as the ground state, it is known to be realized by slight extensions where we consider higher-order terms of the Heisenberg terms such as $h_{ij}h_{kl}$ \cite{sandvik07dqc}.

The two phases N\'eel and VBS turn out to be the two naturally arising phases in the context of the loop representation. 
Recall that in the simplest set up, we were just counting the number of loops and assigning a weight 2 to each loops as in \cref{eq:piEquilibrium}. 
If we extend this Monte Carlo representation to have weight $\alpha^{L(x)}$, then this corresponds to the $\mathrm{SU}(\alpha)$ generalization of the Heisenberg model for $\alpha\in\mathbb{N}$ \cite{Desai2021Resummation,kaul2015marshall}, though we can also consider more general $\alpha \in \mathbb{R}$, analogous to varying the $\delta$ parameter in the Temperley-Lieb algebra \cite{temperley2004relations}. 
In this framework, it is known that there is a phase transition from the ``extended loop phase'' (generalization/analogy of N\'eel phase) to the ``short loop phase'' (VBS phase) at $\alpha_{\mathrm c} \sim 4.5$ in 2-dimensional square lattices \cite{harada2013possibility}. 
Since the $\alpha=1$ case is a trivial sampling problem, this indicates that the $\alpha=2$ case (which we are interested in now) most likely shares the same properties such as rapid-mixing with that trivial limit. 
Furthermore, the VBS phase breaks a discrete symmetry of the lattice and has an exponentially long mixing time (See \cref{sec:vbs} for more details). 
An educated guess, inspired by the rigorous results on the 2-dimensional classical Ising model mixing time (and other models) \cite{lubetzky2012critical,levin2017markov}, would be that on the 2-dimensional square lattice, the $\mathrm{SU}(\alpha)$ generalization mixes rapidly for all $1\leq\alpha<\alpha_{\mathrm c}$ and exponentially slowly for $\alpha>\alpha_{\mathrm c}$. 
From this perspective, it is not a coincidence that the counterexample configurations in \cref{fig:badconfigs} (a-1) correspond to such VBS phases, since they are generally the bottleneck for fast-mixing.

The SU($\alpha$) generalization is regarded more its theoretical features than for describing real physical materials, yet we see here that it can provide a useful high-level perspective of the situation regarding rapid mixing of AFM Heisenberg systems.  Interestingly, all cases where a VBS phase is currently known to be manifested as a ground state, either requires $\alpha >2 $ or other deviations from our setup. 
To name a few, we can have non-bipartite AFM interaction \cite{maj69nex} 
\begin{equation}\label{eq:MajumdarGhosh}
    H= -\sum_{i} h_{i,i+1} -\frac{1}{2}\sum_{i} h_{i,i+2}
\end{equation}
that has a two-fold degenerate VBS ground state that corresponds to the two configurations in \cref{fig:badconfigs} (a-1), or have higher order terms \cite{yang2020decon,tang2011method,Sanyal2011vacancy} 
\begin{equation}\label{eq:Q2chain}
    H= -\sum_{i} h_{i,i+1}h_{i+2,i+3}
\end{equation}
where again the ground state is two-fold degenerate in the same manner. 
While the former case will completely break the stoquastifying transformation \cref{eq:stoquastic_bipartite-rotation} and thus will not allow QMC sampling at all without a drastically different approach, 
the latter with higher order $h$ terms are easily implementable with standard SSE QMC \cite{sandvik07dqc} and even with our QPE formulation.  
All it takes is to allow multiple edges (instead of just one) for a single term in \cref{eq:SSE-loop-state-space}, i.e., to include  $e_i\in\mathcal{E}\times\mathcal E$ corresponding to terms like $h_{ij}h_{kl}$ in the Hamiltonian.

These observations imply that whatever strategy/method we use to show rapid-mixing for general bipartite {\sf QuantumMaxCut} must be appropriately sensitive to the above details: $\alpha$ of the loop weight, the number of edges in a single operator, and the nonexistence of the VBS phase in a pure 2-body interacting Heisenberg model. 
The fact that the arguments we have utilized did not rely on the above details, which therefore reflects a specific limitation of the canonical paths we employed. 
One possible way to circumvent these difficulties is to {\it assume} the existence of the N\'eel order (or the absence of the VBS order), and show rapid mixing of QMC. Even with such assumptions, the proof of rapid-mixing will involve deep and novel understandings of the nature of the N\'eel phase. 

\section{Valence-Bond-Solid (VBS) phase, criticality, and rapid-mixing}\label{sec:vbs}
\vspace{-0.5cm}
\begin{figure}[h]
    \includegraphics[width=10cm]{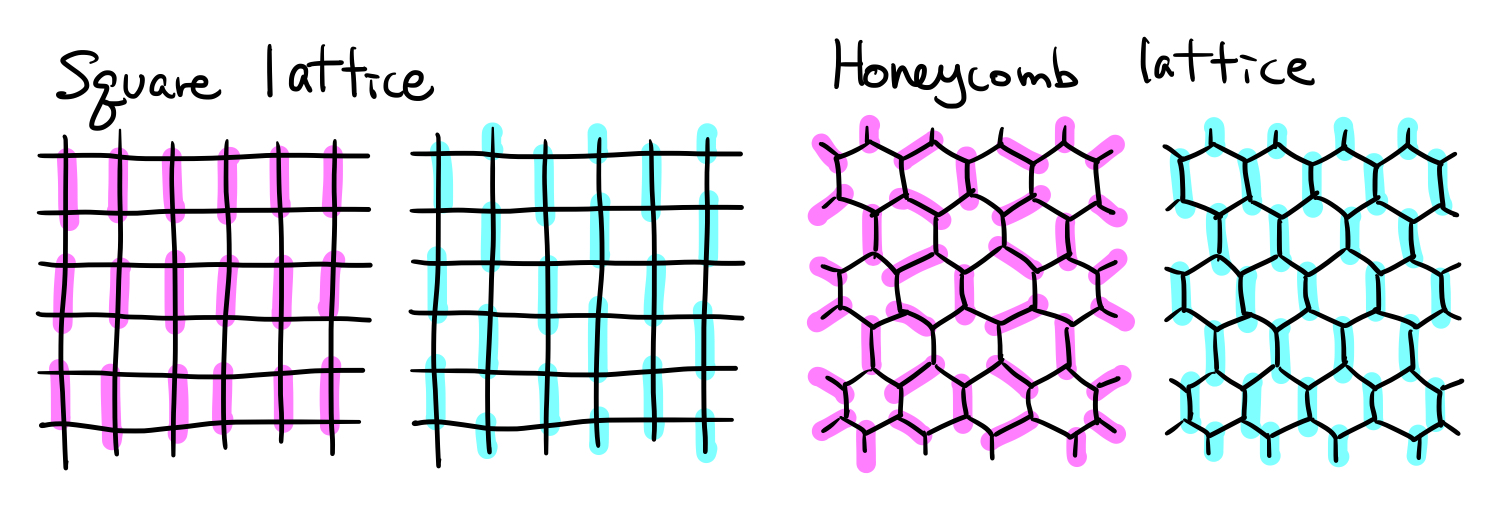}
    \caption{Periodic perfect matchings for the square and honeycomb lattices. The pink ones (both called {\it columnar} VBS) have larger fluctuation than the blue ones (both called {\it staggered} VBS), benefiniting from the entropy effect resulting to become the canonical VBS realized in those lattices. }
    \label{fig:VBSs}
\end{figure}

Here, we briefly introduce the basics for the VBS phase, which is slightly more complicated than the N\'eel phase. 
As we explained in the discussion section, the VBS phase can be thought of as a phase where short loops in the QMC configuration are dominant, and is naturally realized if we consider $\mathrm{SU}(\alpha)$ generalizations with large enough $\alpha$. This is equivalent to having loop factors $\alpha$ instead of 2 in the normal spin-$1/2$ case. 

When the tendency of favoring larger numbers of loops is strong enough, the configurations will typically look similar to those where you have the maximum amount of loops, i.e., a VBS configuration. 
More precisely, in an ideal VBS configuration, a perfect matching of the graph $\mathcal G$ is chosen, and all bridges in the loop representation exist only on those edges in the perfect matching. This will give you essentially the theoretical maximum number of loops $L_{\max}\sim 2B$. This was depicted in \cref{fig:badconfigs} (a-1). 

One may think that in a square lattice there are exponentially many perfect matchings, so the VBS phase should also have equally vast degeneracy. This is not the case, since usually entropy effects remain and there would statistically be some ratio of bridges placed on edges that are {\it not} on the selected perfect matchings. Depending on the VBS pattern (= the perfect matching), this entropy effect varies largely, and only the perfect matching(s) that benefits the most from the entropy effect are the true ground states, as shown in \cref{fig:VBSs}. 

For example, in the 1-dimensional case (the cycle graph) there is only one type of perfect matching (simply choose every other edge), and two ways to realize it: choose between even or odd edges, as in \cref{fig:badconfigs} (a-1) $x$ or $y$. Either way, the order will be captured by the following order parameter 
\begin{equation}
    O_{\mathrm{VBS}}=\frac{1}{N}\sum_{i} (-1)^i h_{i,i+1}, 
\end{equation}
where $h$ is the singlet projector as in \cref{eq:MajumdarGhosh}, \eqref{eq:Q2chain}, or \eqref{eq:HeisenbergHGeneral}. 
Similarly to the N\'eel phase, the VBS phase is technically defined as a non-vanishing value of $\langle O_{\mathrm{VBS}}^2\rangle$ in the thermodynamic limit. 
This definition needs to be modified accordingly to the expected VBS form for higher dimensional lattices.

A crucial point in our discussion in \cref{sec:discussion} was that the VBS phase spontaneously breaks a discrete symmetry of the lattice. In the case of the 1-dimensional chain we explained above, this symmetry corresponded to $\mathbb{Z}_2$ with lattice translation (or spatial reflection with respect to a vertex). In the 2-dimensional square lattice case with columnar VBS in \cref{fig:VBSs}, it becomes the four-fold degeneracy corresponding to $\mathbb{Z}_4$ rotations or $\mathbb{Z}_2\times\mathbb{Z}_2$ reflections \cite{takahashi20vbs}. Since the loop configuration directly reflects the choice of the VBS state (which perfect matching pattern), we must move almost all the bridges in the SPE QMC configuration to move from one state to another, but separated by an exponentially unlikely configuration with a $\textrm{poly}(N)$-size domain wall. This is the reason why discrete symmetry breaking phases are expected to mix exponentially slowly, just as in the Ising model in the ordered phase \cite{lubetzky2012critical}. 

Because the discrete symmetry breaking is key for the slow mixing argument, the same logic does not hold for Hamiltonians with very similar ground states to VBS states as long as they do not spontaneously break discrete symmetry.  
For example, we can consider any lattice e.g. as in \cref{fig:VBSs}, and have 
\begin{equation}
    H = -\sum_{(i,j)\in\mathcal E} (1\pm\Delta) h_{i,j}, 
\end{equation}
where we choose $\pm\Delta$ so that the bonds corresponding to the target VBS pattern is strong. In this way, the ground state is naturally realized as a perfect matching configuration that very much resembles a VBS pattern, but is not spontaneously breaking the symmetry (the symmetry is already explicitly broken in the Hamiltonian). 
In these ground states, usually called as dimer(ized) states, the Monte Carlo configurations {\it do not} need to move between the different symmetry sectors, and thus has no exponential barrier as in the VBS phase. 

One thing to note however, is that in typical cases such as the columnar pattern for the square lattice, the phase transition from N\'eel-to-dimer occurs at some nonzero $\Delta$ value. Currently, all of the observed phase transition of this type are numerically confirmed to be critical (equivalently, second-order and/or continuous) phase transitions, which the slowdown of mixing is only polynomial \cite{troyer1997critical,sandvik06dimer}. A discovery of a first order (discontinuous) N\'eel-to-dimer transition (exponentially slow mixing at the transition point because of $O(N)$ distant configurations with close weights) will therefore be a very strong evidence against our conjecture we argued in \cref{sec:discussion} that (weighted) bipartite {\sf QuantumMaxCut} is in {\sf BPP} through quantum Monte Carlo sampling.

\end{document}